\newtheoremstyle{note}
  {\topsep/2}               
  {\topsep/2}               
  {}                      
  {\parindent}            
  {\itshape}              
  {.}                     
  {5pt plus 1pt minus 1pt}
  {}
\theoremstyle{note}
\newtheorem{theorem}{Theorem}
\newtheorem{lemma}{Lemma}
\newtheorem{corollary}{Corollary}
\newtheorem{proposition}{Proposition}
\theoremstyle{definition}
\theoremstyle{remark}
\newtheorem{remark}{Remark}
\newcommand{\tr}{\operatorname{tr}}
 \newcommand{\rme}{\mathrm{e}}
 \newcommand{\rmi}{\mathrm{i}}
 \newcommand{\rmB}{\mathrm{B}}
 \newcommand{\rmE}{\mathrm{E}}
 \newcommand{\rmT}{\mathrm{T}}
 \newcommand{\caB}{\mathcal{B}}
 \newcommand{\caH}{\mathcal{H}}
 \newcommand{\caR}{\mathcal{R}}
 \newcommand{\caS}{\mathcal{S}}
 \newcommand{\bbZ}{\mathbb{Z}}
  \newcommand{\nni}{\mathbb{Z}^{\geq0}}
  \newcommand{\rank}{\mathrm{rank}}
\newcommand{\be}{\begin{equation}}
\newcommand{\ee}{\end{equation}}
\newcommand{\ba}{\begin{align}}
\newcommand{\ea}{\end{align}}
\def\<{\langle}  
\def\>{\rangle}  
\newcommand{\id}{1}
\def\eqref#1{\textup{(\ref{#1})}}  
\newcommand{\eref}[1]{Eq.~\textup{(\ref{#1})}}
\newcommand{\esref}[1]{Eqs.~\textup{(\ref{#1})}}
\newcommand{\fref}[1]{Fig.~\ref{#1}}
\newcommand{\fsref}[1]{Figs.~\ref{#1}}
\newcommand{\tref}[1]{Table~\ref{#1}}
\newcommand{\sref}[1]{Sec.~\ref{#1}}
\newcommand{\ssref}[1]{Secs.~\ref{#1}}
\newcommand{\thref}[1]{Theorem~\ref{#1}}
\newcommand{\Thref}[1]{Theorem~\ref{#1}}
\newcommand{\lref}[1]{Lemma~\ref{#1}}
\newcommand{\Lref}[1]{Lemma~\ref{#1}}
\newcommand{\pref}[1]{Proposition~\ref{#1}}
\newcommand{\Pref}[1]{Proposition~\ref{#1}}
\newcommand{\psref}[1]{Propositions~\ref{#1}}
\newcommand{\Psref}[1]{Propositions~\ref{#1}}
\newcommand{\cref}[1]{Conjecture~\ref{#1}}
\newcommand{\Cref}[1]{Conjecture~\ref{#1}}
\newcommand{\aref}[1]{Appendix~\ref{#1}}
\newcommand{\rcite}[1]{Ref.~\cite{#1}}
\newcommand{\rscite}[1]{Refs.~\cite{#1}}
\begin{document}
\title{Optimal  verification and fidelity estimation of maximally entangled states}

\author{Huangjun Zhu}
\email{zhuhuangjun@fudan.edu.cn}

\affiliation{Department of Physics and Center for Field Theory and Particle Physics, Fudan University, Shanghai 200433, China}

\affiliation{State Key Laboratory of Surface Physics, Fudan University, Shanghai 200433, China}

\affiliation{Institute for Nanoelectronic Devices and Quantum Computing, Fudan University, Shanghai 200433, China}

\affiliation{Collaborative Innovation Center of Advanced Microstructures, Nanjing 210093, China}

\author{Masahito Hayashi}
\email{masahito@math.nagoya-u.ac.jp}
\affiliation{Graduate School of Mathematics, Nagoya University, Nagoya, 464-8602, Japan}

\affiliation{Shenzhen Institute for Quantum Science and Engineering,
	Southern University of Science and Technology,
	Shenzhen,
	518055, China}
\affiliation{Centre for Quantum Technologies, National University of Singapore, 3 Science Drive 2, 117542, Singapore}

\begin{abstract}
We study the verification of  maximally entangled states by virtue of the simplest measurement settings:  local projective measurements without adaption. We show that optimal protocols are in one-to-one correspondence with complex projective 2-designs constructed from orthonormal bases.  Optimal protocols with minimal  measurement settings are in one-to-one correspondence with complete sets of mutually unbiased bases. Based on this observation, optimal protocols are constructed explicitly for any local dimension, which can also be applied to estimating the fidelity with the target state and to detecting entanglement.  In addition, we show that incomplete sets of mutually unbiased bases are optimal for verifying maximally entangled states when the number of measurement settings is restricted. Moreover, we construct  optimal protocols for the adversarial scenario in which state preparation is not trusted. The number of tests has the same scaling behavior as the counterpart for the nonadversarial scenario; the overhead is no more than three times. 
We also show that the entanglement of the maximally entangled state can be certified with any given significance level using only one test as long as the local dimension is large enough.
\end{abstract}

\date{\today}
\maketitle

\section{Introduction}
Entanglement is a valuable resource in quantum information processing and a focus of foundational studies. Maximally entangled states are particularly useful because of their applications in many quantum information processing tasks, such as teleportation, dense coding, and quantum cryptography. They are also standard units in entanglement manipulations and transformations and thus play a fundamental role in the resource theory of entanglement \cite{HoroHHH09,GuhnT09}.

In practice, it is not easy to produce perfect maximally entangled states due to various experimental imperfections. It is therefore crucial to efficiently verify  the states produced within a given precision based on accessible measurements. Since it is in general very difficult to perform entangling measurements, it is natural to restrict our attention to measurements that can be realized by local operations and classical communication (LOCC)~\cite{HoroHHH09}, which is a standard paradigm in quantum information processing. This problem has been studied before \cite{HayaMT06,HayaSTM06,Haya09G,Haya08,HayaO17,PallLM18}, but most previous approaches entail continuous measurements, which are not practical; see \rcite{Haya08} for a preliminary study on the applications of discrete measurements based on symmetric informationally complete measurements \cite{FuchHS17} and mutually unbiased bases (MUB) \cite{DurtEBZ10,Ivan81,WootF89}.

In this work, we propose practical and efficient protocols for verifying maximally entangled states, which require only a few local projective (LP) measurements without adaption.  We  prove that optimal protocols based on LP measurements are in one-to-one correspondence with weighted complex projective 2-designs constructed from orthonormal bases.
Optimal protocols with minimal  measurement settings
 are in one-to-one correspondence with complete sets of MUB. For any local dimension $d$, optimal protocols can be constructed using at most $\lceil\frac{3}{4}(d-1)^2\rceil+1$ distinct measurement settings. 
These protocols can also be applied to fidelity estimation and entanglement detection. Besides,  incomplete sets of MUB can be used to construct optimal verification protocols when the number of measurement settings is restricted.

Moreover, our approach can be applied to the adversarial scenario in which the states to be verified are prepared by an untrusted party. In this case, our protocols built on LOCC are even optimal among protocols that can access entangling measurements. In addition, we prove that the entanglement of the maximally entangled state can be certified with any given significance level using only one test  as long as the local dimension is large enough. Again, entangling measurements are not necessary to achieve this goal. 
Compared with previous works on single-copy entanglement detection  \cite{DimiD18,SaggDGR19}, our protocol requires a smaller local dimension to achieve a given significance level.

 Our study not only provides practical and efficient  protocols for verifying maximally entangled states, but also highlights the operational significance of 2-designs and MUB. The connection between maximally entangled states and maximally incompatible measurements featured in this work is also of intrinsic interest.

\section{Verification of pure states}
\subsection{Nonadversarial scenario}
Before studying optimal verification of maximally entangle states under LOCC,
 it is instructive to review
the general framework of pure-state verification \cite{PallLM18}, though here we consider more general measurements. The main notations used in this paper are summarized in \tref{tab:notation}. 
Suppose we have  a device that is expected to prepare the target state $|\Psi\>$. In reality,  it turns out the device   produces $\sigma_1, \sigma_2, \ldots, \sigma_N$ in $N$ runs. Now our task is to distinguish between the two cases, assuming that either $\sigma_j$ is identical to $|\Psi\>\<\Psi|$ for all $j$ or satisfies the condition $\<\Psi|\sigma_j|\Psi\>\leq 1-\epsilon$ for all $j$ (this assumption can be relaxed thanks to \rscite{ZhuH19AdS,ZhuH19AdL}).
To this end we can perform two-outcome measurements from a set  of accessible measurements.  Each two-outcome measurement $\{E_j, 1-E_j\}$ is specified by a test operator $E_j$, which satisfies the condition $0\leq E_j\leq \id$ and corresponds to passing the test. It is natural to choose $E_j$ such that the target state $|\Psi\>$ always passes the test; that is, $E_j|\Psi\>=|\Psi\>$ for all $E_j$.
For comparison, the maximal probability that $\sigma_j$   can pass the test in the case $\<\Psi|\sigma_j|\Psi\>\leq 1-\epsilon$ is given by \cite{PallLM18,ZhuH19AdL}
\begin{equation}
\max_{\<\Psi|\sigma|\Psi\>\leq 1-\epsilon }\tr(\Omega \sigma)=1- [1-\beta(\Omega)]\epsilon=1- \nu(\Omega)\epsilon,
\end{equation}
where $\Omega=p_j E_j$ with $p_j$ being the probability of performing the test $E_j$,  $\beta(\Omega)$ is the second largest eigenvalue of $\Omega$, and  $\nu(\Omega)=1-\beta(\Omega)$ is the spectral gap from the maximal eigenvalue. Here $\Omega$ is referred to as a verification operator and a strategy.

After $N$ runs, $\sigma_j$ in the bad case can pass all tests with probability at most $[1-\nu(\Omega)\epsilon]^N$. To guarantee significance level $\delta$ (that is, $[1-\nu(\Omega)\epsilon]^N\leq \delta$), the minimum number of tests reads
\begin{equation}\label{eq:NumberTest}
N=\biggl\lceil\frac{ \ln \delta}{\ln[1-\nu(\Omega)\epsilon]}\biggr\rceil
\leq 
\biggl\lceil\frac{1}{\nu(\Omega)\epsilon}\ln\delta^{-1}\biggr\rceil. 
\end{equation}
Note that passing a single test 
guarantees the significance level 
$1-\nu(\Omega)\epsilon$.
That is, 
only one test is required if 
\begin{equation}\label{eq:OneTestCon}
\nu(\Omega)\epsilon+\delta\geq 1. 
\end{equation}
Here, the meaning of the significance level $\delta$ is that the probability of passing the test is not larger than $\delta$ as long as
$\<\Psi|\sigma_j|\Psi\>\leq 1-\epsilon$.

The number in \eref{eq:NumberTest} decreases monotonically with $\nu(\Omega)$. 
If all measurements are accessible, then the best strategy  is composed of the test $\{|\Psi\>\<\Psi|, 1-|\Psi\>\<\Psi|\}$ based on an entangling measurement. In this case we have $\Omega=|\Psi\>\<\Psi|$, $\nu(\Omega)=1$,  $N=\lceil\ln \delta/\ln(1-\epsilon)\rceil$, and the condition in  \eref{eq:OneTestCon} reduces to $\epsilon+\delta\geq 1$.

\begin{table}[htpb]
  \caption{\label{tab:notation}Summary of notations}
\begin{center}
  \begin{tabular}{|c|l|} 
  	  \hline
  	$d$ ($D$) &Dimension of the local (whole) Hilbert space\\
  \hline
 $\Psi$ & Generic target state\\
   \hline
 $\Phi$ & Maximally entangled  state\\
  \hline
 $\epsilon$ & Infidelity $1- \langle\Psi|\sigma|\Psi \rangle $ when $\sigma$ is the true state\\
  \hline
 $\delta$ & Significance level \\
\hline
 $\Omega$ & Strategy (verification operator) for one-copy state \\
\hline
 $\beta(\Omega)$ & Second largest eigenvalue of $\Omega$ \\
\hline
 $\nu(\Omega)$ & Spectral gap of $\Omega$, that is, $1-\beta(\Omega)$\\
\hline
 $N$ & Number of tests \\
\hline
  \end{tabular}
\end{center}
\end{table}

\subsection{Adversarial scenario}
In the adversarial scenario, the states are prepared by a potentially malicious adversary. In this case, we can still verify the target state by  first performing a random permutation on $N+1$ systems before applying  the  strategy $\Omega$ to $N$ systems  \cite{ZhuH19AdS,ZhuH19AdL}.  Now the performance depends on other eigenvalues of $\Omega$ in addition to $\beta(\Omega)$ (or $\nu(\Omega)$). Denote by  $F(N,\delta,\Omega)$ the minimum fidelity of the reduced state of the remaining party with the target state when $N$ tests are passed with significance level $\delta$. Denote by $N(\epsilon,\delta,\Omega)$  the minimum number of tests required to verify the target state within infidelity $\epsilon$ and significance level $\delta$. 
In general, it is not easy to derive analytical formulas for $F(N,\delta,\Omega)$ and $N(\epsilon,\delta,\Omega)$. Nevertheless,  such formulas have been derived in \rscite{ZhuH19AdS,ZhuH19AdL} for two cases most relevant to the current study.

According to \rscite{ZhuH19AdS,ZhuH19AdL}, if $\Omega$ is singular (has a zero eigenvalue), then we have 
\begin{equation}\label{eq:FiddelAdvUB}
F(N,\delta,\Omega)\leq 1-\min\left\{\frac{1-\delta}{N\delta\nu(\Omega)},\;\frac{1}{(N+1)\delta},\;1\right\}.
\end{equation}
If in addition $\nu(\Omega)\geq 1/2$, then the upper bound is saturated; that is,
\begin{equation}\label{eq:FiddelAdv}
F(N,\delta,\Omega)=1-\min\left\{\frac{1-\delta}{N\delta\nu(\Omega)},\;\frac{1}{(N+1)\delta},\;1\right\}.
\end{equation}
The minimum number of tests required to verify $|\Psi\>$ within infidelity $\epsilon$ and significance level $\delta$ reads \cite{ZhuH19AdL}
\begin{equation}\label{eq:NumberTestAdv}
N(\epsilon,\delta,\Omega)= \min\left\{
\biggl\lceil\frac{1- \delta}{\nu(\Omega) \delta \epsilon}\biggr\rceil,\; \biggl\lceil\frac{1}{\delta\epsilon}-1\biggr\rceil\right\}.
\end{equation}
Compared with \eref{eq:NumberTest}, the scaling  with $\delta$ in \eref{eq:NumberTestAdv} is suboptimal. When the strategy $\Omega$ is composed of the entangling test $\{|\Psi\>\<\Psi|, 1-|\Psi\>\<\Psi|\}$ for example, we have $\nu(\Omega)=1$, and $N(\epsilon,\delta,\Omega)$ is minimized among singular verification strategies. In this case,
\esref{eq:FiddelAdv} and \eqref{eq:NumberTestAdv} reduce to 
\begin{align}
F(N,\delta,\Omega)&=\max\left\{\frac{(N+1)\delta-1}{N\delta},0\right\},  \label{eq:FiddelAdvEnt} \\
N(\epsilon,\delta,\Omega)&= 
\biggl\lceil\frac{1- \delta}{\delta \epsilon}\biggr\rceil. \label{eq:NumberTestAdvEnt}
\end{align}
Therefore, it is impossible to verify the target state within infidelity $\epsilon<1$ and significance level $\delta\leq 1/2$ using only one test for any singular strategy \cite{ZhuH19AdL}.

For a given $\beta(\Omega)$, the optimal performance is achieved when the strategy  $\Omega$ is \emph{homogeneous}  \cite{ZhuH19AdS, ZhuH19AdL}, which means  it has the form
\begin{equation}
\Omega=|\Psi\>\<\Psi|+\lambda(\id-|\Psi\>\<\Psi|),
\end{equation}
where $\lambda=\beta(\Omega)$. In this case, it is natural to write $F(N,\delta,\lambda)$ and $N(\epsilon,\delta,\lambda)$ in place of  $F(N,\delta,\Omega)$ and $N(\epsilon,\delta,\Omega)$. The efficiency of $\Omega$ is determined by \esref{eq:FiddelAdvEnt} and \eqref{eq:NumberTestAdvEnt} when $\lambda=0$. When $0<\lambda<1$, 
define
\begin{align}
\eta_k(\lambda)&:=\frac{k\lambda^{k-1}+(N+1-k)\lambda^k}{N+1},\\ \zeta_k(\lambda)&:=\frac{(N+1-k)\lambda^k}{N+1}.
\end{align}
Then $F(N,\delta,\Omega)$ is given by \cite{ZhuH19AdS, ZhuH19AdL}
\begin{equation}\label{eq:MinFidelityAdv}
F(N,\delta,\lambda)=\begin{cases}
0, & \delta\leq \lambda^N,\\
\delta^{-1}[p_1 \zeta_{k}(\lambda)+p_2\zeta_{k+1}(\lambda)], &\mbox{otherwise}.
\end{cases}
\end{equation}
Here $k$ is the largest integer  that satisfies $\eta_k(\lambda)\geq \delta$, and $p_1,p_2$  are probabilities determined by the conditions 
\begin{equation}
p_1+p_2=1,\quad p_1 \eta_{k}(\lambda)+p_2\eta_{k+1}(\lambda)=\delta. 
\end{equation}

When $N=1$, \eref{eq:MinFidelityAdv} reduces to (cf. Proposition 2 in \rcite{ZhuH19AdL})
\begin{equation}\label{eq:MinFidelityAdvN1}
F(N,\delta,\lambda)=\begin{cases}
0, & \delta\leq \lambda,\\[0.2ex]
\frac{\lambda(\delta-\lambda)}{\delta(1-\lambda)}, & \lambda\leq \delta\leq \frac{1+\lambda}{2}, \\[0.6ex]
\frac{\delta(2-\lambda)-1}{\delta(1-\lambda)}, &  \frac{1+\lambda}{2}\leq \delta\leq 1.
\end{cases}
\end{equation}
Therefore, one test suffices to verify the target state within infidelity $\epsilon$ and significance level $\delta$ (assuming $0<\epsilon,\delta<1$) as long as 
\begin{equation}\label{eq:OneTestConAdv}
\frac{\lambda(\delta-\lambda)}{\delta(1-\lambda)}\geq 1-\epsilon. 
\end{equation}
This  condition is also necessary when $\delta\leq (1+\lambda)/2$.
Note that 
the requirement $\delta\geq \lambda$ is implicitly  implied in  \eref{eq:OneTestConAdv}. 
As an implication, passing a single test can guarantee significance level
\begin{equation}\label{eq:SLoneTestAdv}
\delta\geq \frac{\lambda^2}{\lambda-(1-\lambda)(1-\epsilon)}. 
\end{equation}

Suppose $0<\epsilon,\delta, \lambda<1$. Then  the minimum number $N(\epsilon,\delta,\lambda)$ of tests required to verify $|\Psi\>$ within infidelity $\epsilon$ and significance level $\delta$ in the adversarial scenario is given by \cite{ZhuH19AdS, ZhuH19AdL}
\begin{align}
N(\epsilon, \delta,\lambda)&=\Bigl\lceil\min_{k\in \nni}\!\tilde{N}(\epsilon,\delta,\lambda,k)\Bigr\rceil=\bigl\lceil\tilde{N}(\epsilon,\delta,\lambda,k^*)\bigr\rceil, \label{eq:NumTestHomo1}
	\end{align}
	where 
	\begin{align}
	\tilde{N}(\epsilon,\delta,\lambda,k)&:=\frac{k\nu^2 \delta F +\lambda^{k+1}+\lambda\delta(k\nu-1)}{\lambda\nu\delta \epsilon} \label{eq:NumTestHomoC}
	\end{align}
	with $F=1-\epsilon$ and $\nu=1-\lambda$. Here $\nni$ denotes the set of nonnegative integers, and $k^*$ is the largest integer $k$ that satisfies $\delta\leq \lambda^{k}/(F\nu+\lambda )= \lambda^{k}/(F+\lambda \epsilon)$. In addition, $k^*$ is equal to either $k_-:=\lfloor \log_\lambda \delta \rfloor$ or  $k_+:=\lceil \log_\lambda \delta \rceil$.
	In the limit $\delta\rightarrow 0$, the number  $N(\epsilon,\delta,\lambda)$ can be approximated as follows (assuming that $\lambda$ is lower bounded by a positive constant) \cite{ZhuH19AdL}, 
\begin{equation}\label{eq:NumberTestAdvLS}
N(\epsilon,\delta,\lambda)\approx\frac{F+\lambda\epsilon}{\lambda\epsilon\ln\lambda}
\ln\delta,
\end{equation}
where $F=1-\epsilon$. 
In the high-precision limit $\epsilon,\delta\rightarrow 0$, which is the situation of the most interest, we have
\begin{equation}\label{eq:NumberTestAdvOpt}
N(\epsilon,\delta,\lambda)\approx(\lambda\epsilon\ln\lambda)^{-1}
\ln\delta \geq\rme\epsilon^{-1}
\ln\delta^{-1},
\end{equation}
where the inequality is saturated 
at $\lambda=1/\rme$,  with $\rme$ being the base of the natural logarithm. Here  the number of required tests has the same scaling behaviors with $\epsilon$ and $\delta$ as in the nonadversarial scenario, and  the efficiency of the homogeneous strategy is characterized  by the function $1/(\lambda\ln\lambda^{-1})$. The optimal performance is achieved when $\lambda=1/\rme$,  in which case the overhead is only $\rme$ times.

\begin{figure}
	\includegraphics[width=6cm]{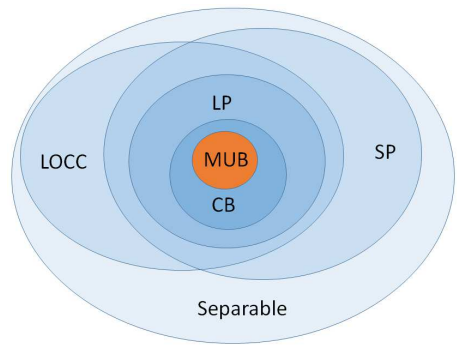}	
	\caption{\label{fig:HVS}Hierarchy of separable verification strategies for maximally entangled states. LP, local projective; SP, separable projective; CB, conjugate basis. Optimal CB strategies can achieve the same performance as the optimal separable strategies. Optimal MUB strategies can also achieve the same performance when the local dimension is a prime power. }
\end{figure}

\section{Limitations of local operations and classical communication}
\subsection{General discussions}

To analyze the limitations of local measurements on quantum state verification, we need to introduce several additional concepts.
A test operator $E$  is \emph{separable} if it is a linear combination of pure product states with nonnegative coefficients. The test $\{E,\id-E\}$ is separable if both $E$ and $\id-E$ are separable. A verification strategy is separable if it is composed of separable tests.  The strategy is \emph{separable projective} (SP) if, in addition, each test operator is a projector. Any verification strategy realized by LOCC is separable, as illustrated in \fref{fig:HVS}.

The robustness of entanglement of a quantum state $\rho$  \cite{HoroHHH09,VidaT99,HarrN03,Stei03,Bran05} is defined as
\begin{align}
E_\caR(\rho)&:=\min\Bigl\{x \Big|x\geq0, \, \exists \mbox{ a state } \sigma,\, \frac{\rho+x\sigma}{1+x}\in \caS \Bigr\}, \label{eq:RoE}
\end{align}
where $\caS$ denotes the set of separable states. If $\sigma$ is required to be the completely mixed state, we get the random robustness \cite{VidaT99},
\begin{align}
R(\rho)&:=\min\Bigl\{x \Big|x\geq0, \, \frac{D\rho+x}{D(1+x)}\in \caS \Bigr\}, \label{eq:RoE2}
\end{align}
where $D$ is the dimension of the whole Hilbert space.
Given a pure state $|\Psi\>$,  the following quantity is closely related to the robustness of entanglement,
\begin{align}
T(\Psi)&:=\min\Bigl\{\tr(E)
\Big| E \ge |\Psi\>\<\Psi| \Bigr\},
\label{eq:RoE3}
\end{align}
where the minimization is taken over
separable tests of the form $\{E,\id-E\}$.
The following lemma is an easy consequence of the definitions of $E_\caR(\Psi)$ and $T(\Psi)$. 
\begin{lemma}\label{lem:OptBoundB}[\cite[Theorem~2]{HayaMMO06} and \cite{OwarH08}]
Any pure state $|\Psi\>$ satisfies
\begin{align} 
T(\Psi)\ge 
E_\caR(\Psi) +1. 
\end{align}
\end{lemma}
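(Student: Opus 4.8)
\emph{Proof strategy.} The plan is to turn an optimal separable test operator for $T(\Psi)$ into an admissible decomposition in the definition \eqref{eq:RoE} of $E_\caR(\Psi)$, with the ``excess'' operator $E-|\Psi\>\<\Psi|$ playing the role of the mixing state.

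First I would fix a test operator $E$ that attains the minimum in \eqref{eq:RoE3}: so $\{E,\id-E\}$ is a separable test, $E\ge|\Psi\>\<\Psi|$, and $t:=\tr(E)=T(\Psi)$. (The minimum is attained because $E=\id$ is feasible and the feasible set is closed and bounded, owing to $0\le E\le\id$.) Since $E\ge|\Psi\>\<\Psi|$ we have $\<\Psi|E|\Psi\>\ge1$, hence $t\ge1$, so only the case $t>1$ needs work: if $t=1$, then $E-|\Psi\>\<\Psi|\ge0$ has vanishing trace and therefore vanishes, so $|\Psi\>\<\Psi|=E$ is separable, $|\Psi\>$ is a product state, $E_\caR(\Psi)=0$, and the inequality holds with equality.

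For $t>1$, I would put $\sigma:=(E-|\Psi\>\<\Psi|)/(t-1)$. Then $\sigma\ge0$ because $E\ge|\Psi\>\<\Psi|$, and $\tr\sigma=1$, so $\sigma$ is a legitimate state, while
\be
\frac{|\Psi\>\<\Psi|+(t-1)\sigma}{1+(t-1)}=\frac{E}{t}
\ee
is a separable state: it is positive with unit trace, and separability of $E$ (a nonnegative combination of pure product states) is inherited by $E/t$. Feeding $x=t-1$ together with this $\sigma$ into \eqref{eq:RoE} gives $E_\caR(\Psi)\le t-1=T(\Psi)-1$, which is exactly the claim.

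All steps are elementary, so there is no real obstacle; the only points requiring a little care are checking that the constructed $\sigma$ and $E/t$ are of precisely the type allowed in \eqref{eq:RoE} (a density operator and a separable state, respectively) and isolating the boundary case $t=1$. I would also remark that the extra requirement that $\id-E$ be separable in the definition of $T(\Psi)$ is not used here: it only shrinks the feasible set and hence can only raise $T(\Psi)$, leaving the inequality intact.
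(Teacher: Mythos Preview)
Your argument is correct and is precisely the ``easy consequence of the definitions'' that the paper alludes to without spelling out: take an optimal separable $E$ for $T(\Psi)$, form $\sigma=(E-|\Psi\>\<\Psi|)/(t-1)$, and observe that $(|\Psi\>\<\Psi|+(t-1)\sigma)/t=E/t$ is separable, yielding $E_\caR(\Psi)\le t-1$. The paper gives no further detail beyond citing \cite{HayaMMO06,OwarH08}, so your write-up matches its intended reasoning.
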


Each test operator $E$ of a separable strategy $\Omega$  for $|\Psi\>$ satisfies the inequality
$\tr(E)\geq T(\Psi)\geq E_\caR(\Psi)+1$, so we have 
$\tr(\Omega)\geq T(\Psi)\geq E_\caR(\Psi)+1$. 
Similarly, we have $\tr(\Omega)\geq [1+R(\Psi)]/[1+R(\Psi)/D]$ if $\Omega$ is homogeneous. This observation implies the following lemma
given that $|\Psi\>$ is an eigenstate of $\Omega$  with eigenvalue~1.
\begin{lemma}\label{lem:OptBound}
Any separable strategy $\Omega$  for $|\Psi\>$ satisfies 
\begin{equation}
\beta(\Omega)
\geq\frac{ T(\Psi)-1}{D-1}\geq \frac{E_\caR(\Psi)}{D-1},
\end{equation}
where $D$ is the dimension of the whole Hilbert space. If $\Omega$ is homogeneous, then 
\begin{equation}
\beta(\Omega)\geq
 \frac{R(\Psi)}{D+R(\Psi)}.
\end{equation}
\end{lemma}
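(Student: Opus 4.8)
\emph{Proof proposal.} The plan is to convert the trace lower bounds recorded just before the lemma into a bound on the second largest eigenvalue, using the spectral structure of a verification strategy. By construction $|\Psi\>$ is an eigenvector of $\Omega$ with eigenvalue $1$, and $1$ is the largest eigenvalue because $0\le\Omega\le\id$; write the remaining eigenvalues as $\mu_2\ge\mu_3\ge\cdots\ge\mu_D$, so $\beta(\Omega)=\mu_2$ and $\mu_j\le\beta(\Omega)$ for every $j\ge 2$. The only inputs I need are (i) $\tr(\Omega)\ge T(\Psi)\ge E_\caR(\Psi)+1$, already established from $E_j\ge|\Psi\>\<\Psi|$, the definition of $T(\Psi)$, and \lref{lem:OptBoundB}, and (ii) $\tr(\Omega)\ge[1+R(\Psi)]/[1+R(\Psi)/D]$ when $\Omega$ is homogeneous.

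For the first chain I would write $\tr(\Omega)=1+\sum_{j=2}^{D}\mu_j\le 1+(D-1)\beta(\Omega)$ and combine it with (i) to get $1+(D-1)\beta(\Omega)\ge T(\Psi)$, hence $\beta(\Omega)\ge[T(\Psi)-1]/(D-1)$; the second inequality $\ge E_\caR(\Psi)/(D-1)$ is then immediate from \lref{lem:OptBoundB}. (The degenerate case $\mu_2=1$, i.e.\ $\beta(\Omega)=1$, causes no trouble since $T(\Psi)\le\tr(\id)=D$.)

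For a homogeneous strategy $\Omega=|\Psi\>\<\Psi|+\lambda(\id-|\Psi\>\<\Psi|)$ we have $\lambda=\beta(\Omega)$ and $\tr(\Omega)=1+(D-1)\lambda$; the bound is trivial when $\lambda=1$ since $R(\Psi)/[D+R(\Psi)]<1$, so assume $\lambda<1$. Substituting $\tr(\Omega)=1+(D-1)\lambda$ into (ii) and solving the resulting linear inequality for $\lambda$ gives $\lambda\ge R(\Psi)/[D+R(\Psi)]$. It is worth noting that (ii) itself is proved by the same kind of argument: since a separable strategy has separable $\Omega$, the normalized operator $\Omega/\tr(\Omega)$ is a separable state, and a one-line computation identifies it with $[D|\Psi\>\<\Psi|+x\id]/[D(1+x)]$ for $x=D\lambda/(1-\lambda)$, so $R(\Psi)\le D\lambda/(1-\lambda)$ by the definition of the random robustness, which rearranges to the claim. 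None of this is genuinely difficult---the substantive ingredients ($\tr(E_j)\ge T(\Psi)$, \lref{lem:OptBoundB}, and the random-robustness identity) are already in hand---so the only real care needed is the eigenvalue bookkeeping that turns a trace bound into the spectral gap with the stated constants, together with the trivial case $\beta(\Omega)=1$.
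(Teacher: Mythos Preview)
Your proposal is correct and follows essentially the same approach as the paper: the paper records the trace lower bounds $\tr(\Omega)\geq T(\Psi)\geq E_\caR(\Psi)+1$ and, for homogeneous $\Omega$, $\tr(\Omega)\geq[1+R(\Psi)]/[1+R(\Psi)/D]$, and then simply says that the lemma follows because $|\Psi\>$ is an eigenstate of $\Omega$ with eigenvalue~$1$; your write-up supplies exactly the missing eigenvalue bookkeeping $\tr(\Omega)\le 1+(D-1)\beta(\Omega)$ and the algebraic rearrangement, and even fills in the random-robustness justification of (ii) that the paper glosses over with ``Similarly.''
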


\begin{lemma}\label{lem:OptOrth}
	The verification operator $\Omega$ of any SP strategy  $\{P_l, p_l\}_{j=1}^g$ for an entangled state $|\Psi\>$ satisfies $\beta(\Omega) \geq 1/g$. 
	The bound is saturated if and only if (iff)
	$\bar{P}_l:=P_l-|\Psi\>\<\Psi|$ are mutually orthogonal and	$p_l=1/g$ for all $l$.
\end{lemma}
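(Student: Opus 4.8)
The plan is to work directly with the structure of an SP strategy. Write $\Omega=\sum_{l=1}^g p_l P_l$ where each $P_l$ is a projector with $P_l|\Psi\>=|\Psi\>$, and set $\bar P_l:=P_l-|\Psi\>\<\Psi|$, which is again a projector (onto a subspace orthogonal to $|\Psi\>$) because $P_l\geq|\Psi\>\<\Psi|$ and both are projectors. Then $\Omega=|\Psi\>\<\Psi|+\bar\Omega$ with $\bar\Omega:=\sum_l p_l\bar P_l$, and since $\bar\Omega$ acts on the orthogonal complement of $|\Psi\>$, we have $\beta(\Omega)=\|\bar\Omega\|$, the largest eigenvalue of $\bar\Omega$. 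So the task reduces to showing $\|\bar\Omega\|\geq 1/g$, with equality characterized as claimed.

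For the bound, I would use a trace/norm comparison on the support of $\bar\Omega$. The key point is that $|\Psi\>$ is entangled, so for each $l$ the test operator must have $\tr(P_l)\geq 2$ by \lref{lem:OptBoundB} (indeed $T(\Psi)\geq E_\caR(\Psi)+1>1$, hence $\tr(P_l)\geq T(\Psi)>1$, so $\tr(P_l)\geq 2$ as it is an integer); equivalently $\bar P_l\neq 0$, so $\tr(\bar P_l)\geq 1$. Pick a unit vector $|\phi\>$ in the support of $\bar P_{l_0}$ for some fixed $l_0$; then $\<\phi|\bar\Omega|\phi\>=\sum_l p_l\<\phi|\bar P_l|\phi\>\geq p_{l_0}\<\phi|\bar P_{l_0}|\phi\>=p_{l_0}$. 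Taking $l_0$ to maximize $p_{l_0}$ and using $\sum_l p_l=1$ gives $\|\bar\Omega\|\geq\max_l p_l\geq 1/g$.

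For the equality condition, I would trace through when every inequality above is tight. Equality $\|\bar\Omega\|=1/g$ forces $p_l=1/g$ for all $l$ (otherwise $\max_l p_l>1/g$). Given uniform weights, $\bar\Omega=\tfrac1g\sum_l\bar P_l$, and $\|\bar\Omega\|\geq\tfrac1g$ with equality iff $\|\sum_l\bar P_l\|=1$. Since each $\bar P_l$ is a nonzero projector, $\sum_l\bar P_l\geq\bar P_{l}$ for each $l$, so its top eigenvalue is at least $1$; it equals $1$ precisely when no vector lies in two of the ranges and, more strongly, when the ranges are mutually orthogonal — here I would argue that if $\<\phi|\bar P_l|\phi\>$ and $\<\phi|\bar P_m|\phi\>$ are both positive for some unit $|\phi\>$ and $l\neq m$, one can tilt $|\phi\>$ within $\mathrm{span}$ of the two ranges to exceed eigenvalue $1$, hence the ranges must be pairwise orthogonal; conversely mutual orthogonality gives $\|\sum_l\bar P_l\|=1$ exactly. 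This matches the stated condition that the $\bar P_l$ are mutually orthogonal and $p_l=1/g$.

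The main obstacle is the "only if" direction of the equality characterization: ruling out the possibility that the $\bar P_l$ overlap in a way that still keeps the operator norm of $\sum_l\bar P_l$ equal to $1$. The clean way to handle this is the variational argument sketched above — restrict $\sum_l\bar P_l$ to the span of the ranges of two non-orthogonal projectors $\bar P_l,\bar P_m$ and exhibit a vector with Rayleigh quotient strictly above $1$ — but making this fully rigorous (especially when the overlap is partial rather than a shared eigenvector) requires a short two-projector spectral analysis, e.g. via Jordan's lemma on the pair $(\bar P_l,\bar P_m)$. I expect that to be the only genuinely delicate step; everything else is bookkeeping with the decomposition $\Omega=|\Psi\>\<\Psi|+\bar\Omega$.
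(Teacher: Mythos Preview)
Your proposal is correct and follows essentially the same line as the paper's proof: decompose $\Omega=|\Psi\>\<\Psi|+\sum_l p_l\bar P_l$, note that entanglement of $|\Psi\>$ forces each $\bar P_l\neq0$, and deduce $\beta(\Omega)=\bigl\|\sum_l p_l\bar P_l\bigr\|\geq\max_l p_l\geq 1/g$, with the equality analysis handled exactly as you describe. The one step you flag as delicate---showing that $\bigl\|\sum_l\bar P_l\bigr\|=1$ forces pairwise orthogonality---does not require Jordan's lemma or any tilting argument: since $\sum_l\bar P_l\geq0$ has norm $1$ it satisfies $\sum_l\bar P_l\leq\id$, so for any unit $|\phi\>$ in the range of $\bar P_m$ one gets $1+\sum_{l\neq m}\<\phi|\bar P_l|\phi\>\leq1$, whence $\bar P_l|\phi\>=0$ for every $l\neq m$ and thus $\bar P_l\bar P_m=0$.
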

\begin{proof}
Since $|\Psi\>$ is entangled,  all projectors $P_l$  have ranks at least 2; that is, $\bar{P}_l$ have ranks at least  1. Therefore, 
\begin{equation}
	\beta(\Omega) = \biggl\|\sum_lp_l \bar{P}_l\biggr\|\geq \max_l p_l\geq \frac{1}{g}.
	\end{equation}
Here the second inequality is saturated iff $p_l=1/g$ for all $l$. In that case, the first inequality is saturated iff $\bar{P}_l$ are mutually orthogonal  for all $l$. This observation completes the proof of \lref{lem:OptOrth}. 
\end{proof}

An  SP strategy $\Omega$ composed of $g$ distinct tests
is   \emph{parsimonious} if $\beta(\Omega) = 1/g$; that is, $\nu(\Omega)=(g-1)/g$. In this case, by \eref{eq:NumberTest},
 the number of tests required to verify  $|\Psi\>$ within infidelity $\epsilon$ and significance level $\delta$ reads
\begin{equation}\label{eq:ParsiBoundN}
N=\biggl\lceil\frac{ \ln \delta}{\ln[1-(g-1)g^{-1}\epsilon]}\biggr\rceil\leq \biggl\lceil\frac{g}{(g-1)\epsilon}\ln\delta^{-1}\biggr\rceil.
\end{equation}
The counterpart for the adversarial scenario is given by  \eref{eq:NumberTestAdv} with $\nu(\Omega)=(g-1)/g$, assuming that $\Omega$ is singular.

\subsection{Bipartite pure states}
Now we turn to a bipartite system with the Hilbert space $\caH\otimes \caH$ of dimension $D=d^2$. Up to a local unitary transformation, any bipartite pure state can be expressed as 
\begin{equation}
|\Psi\>=\sum_{j=0}^{d-1} s_j|jj\>,
\end{equation}
where the Schmidt coefficients $s_0\geq s_1\geq \cdots\geq  s_{d-1}$ are arranged in decreasing order and satisfy the normalization condition $\sum_j s_j^2=1$. 
The robustness and random robustness of entanglement  of $|\Psi\>$ are well known \cite{HoroHHH09,VidaT99,HarrN03,Stei03,Bran05}, as reproduced here,
\begin{equation}\label{eq:RoEbip}
E_\caR(\Psi)=\biggl(\sum_j s_j\biggr)^2-1,\quad R(\Psi)=Ds_0s_1. 
\end{equation}
In addition, Theorem~2 of \rcite{OwarH08} showed that
\begin{equation}\label{eq:RoEbip1}
T(\Psi)=\biggl(\sum_j s_j\biggr)^2.
\end{equation}

By \lref{lem:OptBound},  any separable verification strategy $\Omega$ for $|\Psi\>$ satisfies 
 \begin{equation}\label{eq:betaBip}
\beta(\Omega)\geq\frac{\bigl(\sum_j s_j\bigr)^2-1}{d^2-1}.
 \end{equation}
If $\Omega$ is homogeneous, 
then we have a stronger conclusion,
\begin{equation}
\beta(\Omega)\geq\frac{s_0s_1}{1+s_0s_1}.
\label{eq:betaBipHomo}
\end{equation}
Here  the  inequality also follows from the fact that the partial transpose $(|\Psi\>\<\Psi|)^{\rmT_\rmB}$ has an eigenvalue equal to $-s_0s_1$, while a separable verification operator is necessarily positive partial transpose (PPT).

\subsection{Maximally entangled states}
We are particularly interested in maximally entangled states, which have the form  
\begin{equation}
|\Phi\>=\frac{1}{\sqrt{d}}\sum_j |jj\>
\end{equation}
up to local unitary transformations. 
According to \eref{eq:betaBip} or \eref{eq:betaBipHomo}, any separable strategy $\Omega$ for $|\Phi\>$ satisfies
\begin{equation}\label{eq:betanuBounds}
\beta(\Omega)\geq\frac{1}{d+1},\quad \nu(\Omega)\leq \frac{d}{d+1}.
\end{equation}
For a homogeneous strategy $\Omega$,
Theorem~1 of \rcite{HayaMT06} showed that
\begin{equation}
\Omega \ge \frac{1+d|\Phi\>\<\Phi|}{d+1}.\label{H1}
\end{equation}
Also, \rcite{HayaMT06} showed the existence of a local strategy that  saturates the inequality in 
\eref{H1}.
The bounds in \eref{eq:betanuBounds}
 are saturated iff the inequality  in \eref{H1} is saturated,  in which case we have
 \begin{equation}\label{eq:OptOmega}
\Omega=\frac{1+d|\Phi\>\<\Phi|}{d+1}.
\end{equation}
A separable strategy $\Omega$ for $|\Phi\>$ is \emph{optimal} if it saturates the bound $\beta(\Omega)\geq 1/(d+1)$ or $\nu(\Omega)\leq d/(d+1)$ in \eref{eq:betanuBounds}; an SP strategy is \emph{perfect} if it is both optimal and parsimonious.

For the optimal strategy, \eref{eq:NumberTest} reduces to 
\begin{equation}\label{eq:OptimalBoundN}
N=\biggl\lceil\frac{ \ln \delta}{\ln[1-d(d+1)^{-1}\epsilon]}\biggr\rceil\leq
\biggl\lceil\frac{d+1}{d\epsilon}\ln\delta^{-1}\biggr\rceil. 
\end{equation}
In the independent and identically distributed (i.i.d.) case, this result can  be derived 
from Sec.~4.3.2 and Eq.~(36) of \rcite{Haya09G}; see  \rcite{PallLM18} for the case $d=2$. 
Notably, thanks
 to \eref{eq:OneTestCon},
passing a single test 
guarantees the significance level 
$1-\frac{d}{d+1}\epsilon$
in the nonadversarial scenario.
The counterpart for the adversarial scenario is given by \eref{eq:SLoneTestAdv} with $\lambda=1/(d+1)$. 
These conclusions will have  important implications for entanglement detection as we shall see in \ssref{sec:EntDetectNA} and \ref{sec:EntDetectAdv}. In the large-$d$ limit, we have
 \begin{equation}
N=\biggl\lfloor \frac{\ln \delta}{\ln(1-\epsilon)}\biggr\rfloor+1\leq
\biggl\lceil\frac{\ln\delta^{-1}}{\epsilon}\biggr\rceil.
\end{equation} 
The number of tests is almost the same as what is required by the best
strategy based on entangling measurements.

Let $\rho$ be an arbitrary quantum state and
\begin{equation}
F(\rho,|\Phi\>\<\Phi|):=\tr(\rho |\Phi\>\<\Phi| )
\end{equation}
the fidelity between $\rho$ and $|\Phi\>\<\Phi|$.
If $\Omega$ is the optimal strategy given in \eref{eq:OptOmega}, then 
\begin{equation}
\tr(\rho\Omega)=\frac{1+dF(\rho,|\Phi\>\<\Phi| )}{d+1};
\end{equation}
(cf.~Theorem~1 in  \rcite{HayaMT06}),
so the fidelity can be inferred  from the passing probability,
\begin{equation}\label{eq:FidelityEst}
F(\rho,|\Phi\>\<\Phi|)=\frac{(d+1)\tr(\rho\Omega)-1}{d}.
\end{equation}
Therefore, homogeneous verification strategies can also serve for fidelity estimation.

\section{Verification of maximally entangled states}
\subsection{\label{sec:OptNA}Parsimonious and optimal verification strategies}
Given any  basis $\caB=\{|\psi_1\>, |\psi_2\>, \ldots, |\psi_d\>\}$ for $\caH$ (in this paper we only consider orthonormal bases), we can devise a \emph{conjugate-basis} (CB) test as follows: Alice performs the projective measurement on the basis $\caB$, while Bob performs the projective measurement on the conjugate basis $\caB^*=\{|\psi^*\>:|\psi\>\in \caB\}$, where $|\psi^*\>$ denotes the complex conjugate of $|\psi^*\>$ (with respect to the  given computational basis used to define $|\Phi\>$). The CB test is passed if Alice and Bob obtain the same outcome; in other words, the pass eigenspace is spanned by $|\psi\>\otimes |\psi^*\>$ for all $|\psi\>\in \caB$, and the test projector has the form
\begin{equation}\label{eq:CBproj}
P(\caB):=\sum_{|\psi\>\in \caB}
 |\psi\> \< \psi|  \otimes |\psi^*\>  \<\psi^*|,
\end{equation}
which has rank $d$.
A similar idea was used  to construct general tests from positive operator-valued measures (POVMs) on $\caH$ (see Eq.~(13) of \rcite{Haya09G}), while the test here is simpler and easier to realize.  Note that $P(\caB)|\Phi\>=|\Phi\>$ for any orthonormal basis $\caB$ of $\caH$, so $|\Phi\>$ can pass the test with certainty as desired.  A CB strategy $\{P(\caB_l),p_l\}_l$ is composed of CB tests, where $\caB_l$ are bases for $\caH$, and $p_l$ form a probability distribution. The resulting verification operator reads 
\begin{equation}
\Omega=\sum_l p_l P_l=\sum_{l}p_l \sum_{|\psi\>\in \caB_l}
|\psi\> \< \psi|  \otimes |\psi^*\>  \<\psi^*|.
\end{equation}

\begin{proposition}\label{pro:OrthMUB}
Let $\caB_1$ and $\caB_2$ be two orthonormal bases for $\caH$. Then $\bar{P}(\caB_1)$ and $\bar{P}(\caB_2)$ are orthogonal iff $\caB_1$ and $\caB_2$ are mutually unbiased. 
\end{proposition}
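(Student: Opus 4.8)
The plan is to compute $\bar{P}(\caB_1)\bar{P}(\caB_2)$ explicitly and determine exactly when it vanishes. Write $\caB_1=\{\ket{\psi_j}\}_{j=1}^{d}$ and $\caB_2=\{\ket{\phi_k}\}_{k=1}^{d}$. Two preliminary observations set things up. First, since $P(\caB)\ket{\Phi}=\ket{\Phi}$, i.e.\ $\outer{\Phi}{\Phi}\le P(\caB)$, the operator $\bar{P}(\caB)=P(\caB)-\outer{\Phi}{\Phi}$ is itself an orthogonal projector (of rank $d-1$); consequently "$\bar{P}(\caB_1)$ and $\bar{P}(\caB_2)$ are orthogonal" is unambiguous — the conditions $\bar{P}(\caB_1)\bar{P}(\caB_2)=0$, $\tr[\bar{P}(\caB_1)\bar{P}(\caB_2)]=0$, and "orthogonal ranges" all coincide. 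Second, the maximally entangled state has the basis-independent form $\ket{\Phi}=d^{-1/2}\sum_{|\psi\>\in\caB}\ket{\psi}\otimes\ket{\psi^*}$ for every orthonormal basis $\caB$. Using $P(\caB_i)\ket{\Phi}=\ket{\Phi}$, the three cross terms in the product collapse and
\begin{equation}
\bar{P}(\caB_1)\bar{P}(\caB_2)=P(\caB_1)P(\caB_2)-\outer{\Phi}{\Phi}.
\end{equation}

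Next I would express both operators on the right in terms of the overlap matrix of the two bases. From $\eqref{eq:CBproj}$ together with the elementary identity $\inner{\psi^*}{\phi^*}=\overline{\inner{\psi}{\phi}}$ one obtains $P(\caB_1)P(\caB_2)=\sum_{j,k}|\inner{\psi_j}{\phi_k}|^{2}\,\outer{\psi_j}{\phi_k}\otimes\outer{\psi_j^*}{\phi_k^*}$, while the second observation above (applied with $\caB_1$ in the ket slot and $\caB_2$ in the bra slot) gives $\outer{\Phi}{\Phi}=d^{-1}\sum_{j,k}\outer{\psi_j}{\phi_k}\otimes\outer{\psi_j^*}{\phi_k^*}$. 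Subtracting,
\begin{equation}
\bar{P}(\caB_1)\bar{P}(\caB_2)=\sum_{j,k}\Bigl(|\inner{\psi_j}{\phi_k}|^{2}-\tfrac1d\Bigr)\,\outer{\psi_j}{\phi_k}\otimes\outer{\psi_j^*}{\phi_k^*}.
\end{equation}

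To conclude, note that $\{\outer{\psi_j}{\phi_k}\}_{j,k}$ is a basis of the operators on $\caH$, and likewise $\{\outer{\psi_j^*}{\phi_k^*}\}_{j,k}$, so the family $\{\outer{\psi_j}{\phi_k}\otimes\outer{\psi_j^*}{\phi_k^*}\}_{j,k}$ appearing above is linearly independent. Hence $\bar{P}(\caB_1)\bar{P}(\caB_2)=0$ iff every coefficient vanishes, i.e.\ $|\inner{\psi_j}{\phi_k}|^{2}=1/d$ for all $j,k$, which is precisely mutual unbiasedness of $\caB_1$ and $\caB_2$. I do not anticipate a real obstacle here; the only points requiring care are the bookkeeping with complex conjugates (applying $\inner{\psi^*}{\phi^*}=\overline{\inner{\psi}{\phi}}$ in the correct tensor slot) and, if one wishes to be fully rigorous about the word "orthogonal," checking that $\bar{P}(\caB)$ is a genuine projector so that the several notions of orthogonality agree.
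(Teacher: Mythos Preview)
Your argument is correct. The computation of $\bar P(\caB_1)\bar P(\caB_2)=P(\caB_1)P(\caB_2)-\outer{\Phi}{\Phi}$ and the explicit expansion in the $\outer{\psi_j}{\phi_k}\otimes\outer{\psi_j^*}{\phi_k^*}$ family are right, and the linear-independence step is justified because $\{\outer{\psi_j}{\phi_k}\}_{j,k}$ and $\{\outer{\psi_{j'}^*}{\phi_{k'}^*}\}_{j',k'}$ are each Hilbert--Schmidt orthonormal, so the full tensor family $\{\outer{\psi_j}{\phi_k}\otimes\outer{\psi_{j'}^*}{\phi_{k'}^*}\}_{j,k,j',k'}$ is a basis of operators on $\caH\otimes\caH$ and your diagonal subfamily is in particular linearly independent.

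The paper takes a shorter, trace-based route rather than computing the full operator. It observes that for projectors orthogonality is equivalent to $\tr[\bar P(\caB_1)\bar P(\caB_2)]=0$, computes
\[
\tr[\bar P(\caB_1)\bar P(\caB_2)]=\tr[P(\caB_1)P(\caB_2)]-1=\sum_{j,k}|\inner{\psi_j}{\phi_k}|^4-1,
\]
and then invokes the inequality $\sum_{j,k}|\inner{\psi_j}{\phi_k}|^4\ge 1$ (Cauchy--Schwarz applied to the doubly-stochastic array $|\inner{\psi_j}{\phi_k}|^2$), with equality iff all overlaps equal $1/d$. So the paper trades your linear-independence argument for a scalar inequality. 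Your approach is a bit longer but yields more: an explicit formula for $\bar P(\caB_1)\bar P(\caB_2)$, and it makes the equivalence of the various notions of ``orthogonal'' completely transparent without needing the positivity-based fact that $\tr(AB)=0\Rightarrow AB=0$ for PSD $A,B$.
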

Here $\bar{P}(\caB)={P}(\caB)-|\Phi\>\<\Phi|$. 
 \Pref{pro:OrthMUB} is an implication of the following inequality
\begin{align}
&\tr[\bar{P}(\caB_1)\bar{P}(\caB_2)]+1=\tr[P(\caB_1)P(\caB_2)]\nonumber\\
&=\sum_{|\psi\>\in \caB_1,|\varphi\>\in \caB_2} |\<\psi|\varphi\>|^4\geq 1,
\end{align}
which is saturated iff $\caB_1$ and $\caB_2$ are mutually unbiased; that is, $|\<\psi|\varphi\>|^2=1/d$ for all $|\psi\>\in \caB_1$ and $|\varphi\>\in \caB_2$ \cite{DurtEBZ10}.  \Lref{lem:OptOrth} and \pref{pro:OrthMUB} together yield the following proposition. 
\begin{proposition}\label{pro:CBparsimonious}
Let  $\caB_1, \caB_2,\ldots, \caB_g$ be $g$ bases for $\caH$. The CB strategy  $\{P(\caB_l),p_l\}$ is parsimonious iff $\caB_1, \caB_2,\ldots, \caB_g$ are mutually unbiased and all $p_l$ are equal to $1/g$. The strategy is perfect iff, in addition, $g=d+1$, so that $\caB_1, \caB_2,\ldots, \caB_g$ form a complete set of MUB.
\end{proposition}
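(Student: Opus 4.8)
The plan is to deduce \pref{pro:CBparsimonious} by combining \lref{lem:OptOrth} with \pref{pro:OrthMUB}, so the proof is essentially a matter of unwinding definitions. First I would recall that, by definition, the CB strategy $\{P(\caB_l),p_l\}$ is an SP strategy for the entangled state $|\Phi\>$, since each $P(\caB_l)$ is a rank-$d$ projector satisfying $P(\caB_l)|\Phi\>=|\Phi\>$. Hence \lref{lem:OptOrth} applies verbatim: $\beta(\Omega)\ge 1/g$, with equality (i.e., the strategy is parsimonious) iff the operators $\bar{P}(\caB_l)=P(\caB_l)-|\Phi\>\<\Phi|$ are mutually orthogonal and $p_l=1/g$ for all $l$. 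So the only thing left to identify is when the $\bar{P}(\caB_l)$ are pairwise orthogonal.

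Next I would invoke \pref{pro:OrthMUB}, which states precisely that $\bar{P}(\caB_i)$ and $\bar{P}(\caB_j)$ are orthogonal iff $\caB_i$ and $\caB_j$ are mutually unbiased. Applying this to every pair $i\ne j$, the condition "the $\bar{P}(\caB_l)$ are mutually orthogonal" is equivalent to "the bases $\caB_1,\dots,\caB_g$ are pairwise mutually unbiased." Combining this with the condition $p_l=1/g$ from \lref{lem:OptOrth} gives the first claim: the CB strategy is parsimonious iff $\caB_1,\dots,\caB_g$ are mutually unbiased and all $p_l=1/g$.

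For the second claim, recall from the definitions in the excerpt that an SP strategy is \emph{perfect} iff it is both parsimonious and optimal, where optimality for $|\Phi\>$ means $\beta(\Omega)=1/(d+1)$, equivalently $\nu(\Omega)=d/(d+1)$. A parsimonious strategy with $g$ distinct tests has $\beta(\Omega)=1/g$, so it is additionally optimal iff $1/g=1/(d+1)$, i.e., $g=d+1$. Thus the strategy is perfect iff it is parsimonious (so the $\caB_l$ are pairwise MUB with equal weights $1/g$) and $g=d+1$; but $d+1$ pairwise mutually unbiased bases in dimension $d$ is by definition a complete set of MUB. I should be a little careful here to check internal consistency, namely that $g=d+1$ mutually unbiased bases actually exist and give a genuine SP strategy—but this is not needed for the "iff" statement, which is purely a characterization; existence in prime-power dimensions is a separate (known) matter addressed elsewhere in the paper.

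I do not anticipate a serious obstacle: the content of the proposition is already packaged in \lref{lem:OptOrth} and \pref{pro:OrthMUB}, so the proof is a two-line synthesis. The only point requiring minor care is the bookkeeping for the "perfect" case—making sure to correctly quote the definition that "perfect $=$ optimal $+$ parsimonious" and that optimality pins down $g=d+1$ via $\beta(\Omega)=1/(d+1)$—and noting that "$g$ distinct tests" in \lref{lem:OptOrth} matches the "$g$ distinct measurement settings" counting here, so that $\beta(\Omega)=1/g$ with $g$ the number of \emph{distinct} bases.
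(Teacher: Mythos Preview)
Your proposal is correct and follows exactly the paper's own approach: the paper states that \lref{lem:OptOrth} and \pref{pro:OrthMUB} together yield \pref{pro:CBparsimonious}, and your write-up simply unpacks this one-line justification, including the observation that perfection forces $g=d+1$ via $\beta(\Omega)=1/g=1/(d+1)$.
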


An MUB strategy is 
a CB strategy based on MUB and with uniform probabilities, which  is parsimonious by \pref{pro:CBparsimonious}. 
If a set of   $g$ MUB  is available, then the number of tests required  to verify  $|\Phi\>$ within infidelity $\epsilon$ and significance level $\delta$ in the  nonadversarial  scenario
is given by \eref{eq:NumberTest}  with $\nu(\Omega)=(g-1)/g$; that is, $\beta(\Omega)=1/g$. 
Incidentally, the maximally entangled state $|\Phi\>$ is equivalent to a qudit graph state for which we have introduced a general verification protocol called  
 the cover protocol  \cite{ZhuH19E}; see also \rcite{HayaM15} when $d=2$. In retrospect, the cover protocol in  this special case is equivalent to an MUB strategy constructed from two bases (that is $g=2$).

For $d\geq2$, there exist at least three bases that are mutually unbiased \cite{DurtEBZ10}. Define operators $Z$ and $X$ as follows,
\begin{equation}\label{eq:ZX}
Z|j\>=\omega^j |j\>,\quad X |j\>=|j+1\>,\quad \omega=\rme^{2\pi\rmi/d},
\end{equation}
where $j\in \bbZ_d$ and $\bbZ_d$ is the ring of integers modulo $d$.
Then the  two operators $Z$ and $X$ generate the  Heisenberg-Weyl group (up to phase factors), which reduces to the Pauli group in the case of a qubit.
The respective eigenbases of the three operators $Z$, $X$, and $XZ$  are mutually unbiased \cite{DurtEBZ10}, and a parsimonious strategy can be constructed using the three bases; here the test projectors can be expressed explicitly as in \eref{eq:MUBTP} below. So the maximally entangled state $|\Phi\>$ in any dimension $d$ can be verified within infidelity $\epsilon$ and significance level $\delta$ with only
\begin{equation}
\biggl\lceil\frac{\ln\delta}{\ln (1-2\epsilon/3)}\biggr\rceil\leq\Bigl\lceil\frac{3}{2\epsilon}\ln\delta^{-1}\Bigr\rceil
\end{equation}
 tests according to  \eref{eq:ParsiBoundN}. This number is only 50\% more than the number required by the best strategy based on entangling measurements.

 If $d$ has the prime-power decomposition $d=\prod_{j=1}^r p_j^{n_j}$, where $p_j$ are distinct primes and $n_j$ are positive integers,
then at least $\min_j (p_j^{n_j}+1)$ bases can be found that are mutually unbiased. In particular, 
a complete set of $d+1$ MUB can be constructed when the dimension is a prime power \cite{Ivan81,WootF89,DurtEBZ10}.  When the dimension $d$ is a  prime for example, the respective eigenbases of
$Z, X, XZ, XZ^2,..., XZ^{d-1}$ form a complete set of  MUB. In this case, the $d+1$ test projectors can be expressed as 
\begin{subequations}\label{eq:MUBTP}
\begin{align}
P_{0}&=\frac{1}{d}\sum_{k=0}^{d-1}(Z\otimes Z^{-1})^k=\sum_j |jj\>\<jj|,\\
P_{m+1}&=\frac{1}{d}\sum_{k=0}^{d-1} (XZ^m\otimes XZ^{-m})^k,\quad m=0,1,\ldots, d-1.
\end{align}
\end{subequations}
The resulting verification protocol is perfect, and the number of tests required to verify $|\Phi\>$ attains the lower bound in \eref{eq:OptimalBoundN}. When $d=2$, the three test projectors read
\begin{align}\label{eq:MUBqubitTP}
\frac{1+Z^{\otimes 2}}{2},\quad
\frac{1+X^{\otimes 2}}{2},\quad \frac{1-Y^{\otimes 2}}{2},\quad 
\end{align}
with $Y=\rmi XZ$, which reproduce the result in \rcite{PallLM18}.

When a complete set of MUB is not available, we can still devise optimal verification protocols for $|\Phi\>$ using (weighted complex projective) 2-designs. Let $P_+
$ be the projector onto the symmetric subspace of  $\caH^{\otimes 2}$. 
A weighted set of kets $\{|\psi_\xi\>, w_\xi\}$   in  $\caH$ with $w_\xi\geq0$ and  $\sum_\xi w_\xi=d$ is  a  \emph{2-design} \cite{Zaun11,ReneBSC04,Scot06} if 
$\sum_{\xi} w_\xi (|\psi_\xi\>\<\psi_\xi|)^{\otimes 2 }=2P_+/(d+1)$;
that is
\begin{equation}\label{eq:2designAlt}
\sum_{\xi} w_\xi (|\psi_\xi\>\<\psi_\xi|)\otimes (|\psi_\xi^*\>\<\psi_\xi^*|)
=\frac{1}{d+1}(1+d |\Phi\>\<\Phi|).
\end{equation}
Let $\{\caB_l, p_l\}_{l}$ be a weighted set of kets  with the uniform weight $p_l$ for all kets in basis $l$ and $\sum_lp_l=1$; note that the total weight is  $d\sum_l p_l =d$.   Then  $\{\caB_l, p_l\}_{l}$ forms a 2-design iff $\Omega=\sum_l p_l P(\caB_l)=(1+d |\Phi\>\<\Phi|)/(d+1)$. This observation confirms  the following result. 
\begin{proposition}\label{pro:CBoptimal}
A CB strategy  $\{P(\caB_l),p_l\}$ is optimal iff $\{\caB_l, p_l\}_{l}$ forms a 2-design.
\end{proposition}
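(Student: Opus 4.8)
The plan is to unwind both directions of the biconditional in \pref{pro:CBoptimal} directly from the definitions, using the characterization of optimality already established in \eref{eq:OptOmega} and the reformulation of the 2-design condition given in \eref{eq:2designAlt}. First I would recall that, by definition, the CB strategy $\{P(\caB_l),p_l\}$ has verification operator $\Omega=\sum_l p_l P(\caB_l)=\sum_l p_l \sum_{|\psi\>\in\caB_l}|\psi\>\<\psi|\otimes|\psi^*\>\<\psi^*|$, and that $\{\caB_l,p_l\}$ (with each ket in $\caB_l$ carrying weight $p_l$, total weight $d\sum_l p_l=d$) is by \eref{eq:2designAlt} a 2-design precisely when this same sum equals $(1+d|\Phi\>\<\Phi|)/(d+1)$. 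Thus the 2-design condition is literally the statement $\Omega=(1+d|\Phi\>\<\Phi|)/(d+1)$.

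Next I would connect this to optimality. By \eref{eq:betanuBounds} any separable strategy for $|\Phi\>$ satisfies $\beta(\Omega)\geq 1/(d+1)$, and by the discussion around \eqref{H1}--\eqref{eq:OptOmega} the strategy is optimal (saturates this bound) if and only if $\Omega$ equals the specific operator $(1+d|\Phi\>\<\Phi|)/(d+1)$ in \eref{eq:OptOmega}. Combining the two equivalences gives: the CB strategy is optimal $\iff$ $\Omega=(1+d|\Phi\>\<\Phi|)/(d+1)$ $\iff$ $\{\caB_l,p_l\}$ is a 2-design. The only point that needs a word of care is that a CB strategy is automatically separable (its test operators $P(\caB_l)$ are sums of pure product projectors $|\psi\>\<\psi|\otimes|\psi^*\>\<\psi^*|$ with nonnegative coefficients, and so are their complements since $P(\caB_l)$ has rank $d$ in a $d^2$-dimensional space — this is noted after \eref{eq:CBproj}), so the lower bound \eref{eq:betanuBounds} and the saturation criterion \eref{eq:OptOmega} genuinely apply.

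The main (mild) obstacle is bookkeeping the weights and total mass: one must verify that the weighted set of kets extracted from the bases, namely each $|\psi\>\in\caB_l$ assigned weight $p_l$, indeed has total weight $\sum_l d\,p_l=d$ as required in the definition of a 2-design, and that the left-hand side of the 2-design condition \eref{eq:2designAlt} reorganizes exactly into $\sum_l p_l P(\caB_l)=\Omega$ using the definition \eref{eq:CBproj} of $P(\caB)$. Both are immediate once written out, so there is no substantive difficulty; the proposition is really a dictionary entry translating the already-proven analytic facts into the language of designs. I would therefore present the proof as a two-line chain of equivalences, citing \eref{eq:2designAlt}, \eref{eq:betanuBounds}, and \eref{eq:OptOmega}, with a parenthetical remark that $P(\caB_l)$ being rank-$d$ makes the strategy separable so the earlier bounds apply.
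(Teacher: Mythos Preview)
Your proposal is correct and follows essentially the same route as the paper: the paper's proof is the one-sentence observation preceding the proposition, namely that the 2-design condition \eref{eq:2designAlt} applied to $\{\caB_l,p_l\}$ is literally the equation $\Omega=\sum_l p_l P(\caB_l)=(1+d|\Phi\>\<\Phi|)/(d+1)$, which by \eref{eq:OptOmega} is the optimality condition. Your only addition is the remark that CB tests are separable so that \eref{eq:betanuBounds} applies; this is fine, though the cleanest justification is not the rank count but that $\id-P(\caB)=\sum_{j\neq k}|\psi_j\>\<\psi_j|\otimes|\psi_k^*\>\<\psi_k^*|$ is manifestly a sum of product projectors (equivalently, CB tests are LP and hence LOCC).
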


\Psref{pro:CBparsimonious} and \ref{pro:CBoptimal} together imply the following result first derived in \rcite{RoyS07} (cf.~Theorem~3.2 there):
	at least $d+1$ bases are needed for constructing a 2-design in dimension $d$; if the lower bound is saturated, then all bases are mutually unbiased and have the same weight. When $d$ is a prime power, the lower bound can always be saturated \cite{WootF89,DurtEBZ10,KlapR05M,Zhu15M}. When $d+1$ is a prime power, a 2-design can be constructed from $d+2$ bases according to \rcite{RoyS07}, so an optimal CB strategy can be constructed using only $d+2$ measurement settings. When $d=6$ for example, a 2-design can be constructed from eight bases, although a complete set of MUB is not expected to exist.

\begin{proposition}\label{pro:OptCBTnum}
For any  maximally entangled state with local dimension
$d\geq 3$, an optimal verification protocol can be devised with at most $\lceil\frac{3}{4}(d-1)^2\rceil+1$ distinct CB tests.
\end{proposition}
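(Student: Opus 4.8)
The plan is to reduce, via \pref{pro:CBoptimal}, to the purely geometric task of building a weighted $2$-design from orthonormal bases of $\caH$ that uses at most $\lceil\tfrac34(d-1)^2\rceil+1$ of them. When $d$ is a prime power a complete set of $d+1$ MUB already does this, and $d+1\le\lceil\tfrac34(d-1)^2\rceil+1$ for every $d\ge3$; when $d+1$ is a prime power the $(d+2)$-basis $2$-design of \cite{RoyS07} does it, and then $d\ge6$, so again the inequality holds. One may therefore assume that neither $d$ nor $d+1$ is a prime power, the smallest such dimension being $d=14$; this is the only case that needs work.

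For that case I would run a Carathéodory-type argument. Set $T:=\tfrac1{d+1}(\id+d\ket{\Phi}\bra{\Phi})$, the target operator of \eqref{eq:OptOmega}. Every CB test $P(\caB)$ is a rank-$d$ projector with $P(\caB)\ket{\Phi}=\ket{\Phi}$ and $\Tr P(\caB)=d$, and $T$ equals the Haar average of $P(\caB)$ over all bases (Schur's lemma for the $U\mapsto U\otimes\bar U$ action on $\caH\otimes\caH$, which splits into the trivial line $\bbC\ket{\Phi}$ and its orthogonal complement), so $T\in\mathrm{conv}\{P(\caB)\}$. It is thus enough to exhibit a \emph{finite} family $\caF$ of orthonormal bases with (i) $T\in\mathrm{conv}\{P(\caB):\caB\in\caF\}$ and (ii) $\{P(\caB):\caB\in\caF\}$ has affine hull of dimension at most $\lceil\tfrac34(d-1)^2\rceil$: Carathéodory's theorem then writes $T$ as a convex combination of at most $\lceil\tfrac34(d-1)^2\rceil+1$ members of $\caF$, and discarding the zero-weight bases gives the strategy. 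For $\caF$ I would build on an explicit \emph{seed} --- for odd $d$, the computational basis together with the $d$ quadratic-phase bases $\bigl\{\,\ket{v^{(c)}_k}=\tfrac1{\sqrt d}\sum_j\omega^{cj^2+kj}\ket{j}\,\bigr\}_{k}$, $c\in\bbZ_d$; for general $d$, the pairwise tensor products, over the prime-power factorization of $d$, of complete sets of MUB --- and then adjoin a bounded set of correction bases. A direct computation (cleanest after partial transposing the second system, where $P(\caB)^{\rmT_\rmB}$ becomes the projector onto the $d$-dimensional subspace $\mathrm{span}\{\ket{\psi}^{\otimes2}:\ket{\psi}\in\caB\}$ of $\mathrm{Sym}^2(\caH)$ and $T^{\rmT_\rmB}$ becomes $\tfrac2{d+1}$ times the projector onto $\mathrm{Sym}^2(\caH)$) shows that the seed already reproduces $T$ on the symmetric-times-symmetric sector, the residual discrepancy being an explicit operator supported on a short list of vectors indexed by the extra solutions of $j+j'\equiv\ell+\ell'$, $j^2+j'^2\equiv\ell^2+\ell'^2$ in $\bbZ_d$ that arise only when $d$ is composite (respectively, by the antisymmetric-times-antisymmetric sectors of the tensor factors); the corrections are chosen to kill this discrepancy, and one checks that seed and corrections together span an affine space of dimension at most $\lceil\tfrac34(d-1)^2\rceil$.

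I expect step (ii) to be the main obstacle. A naive Carathéodory over \emph{all} bases is useless, since $\mathrm{span}\{P(\caB):\caB\}$ already has dimension $1+\tfrac14d^2(d-1)(d+3)$, of order $d^4$ --- after partial transpose it is the trivial module together with the irreducible $U(d)$-module of highest weight $(2,0,\dots,0,-2)$ inside $\mathrm{End}(\mathrm{Sym}^2(\caH))$ --- so $\caF$ must be confined to a genuinely small slice of operator space while remaining rich enough that the very symmetric operator $T$ lies interior to its convex hull. Reconciling these two demands, and doing the bookkeeping that caps the dimension of the relevant slice at $\lceil\tfrac34(d-1)^2\rceil$ (which is where the constant $3/4$ is forced), is the delicate part; the rest is routine linear algebra and finite Fourier analysis.
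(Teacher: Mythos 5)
Your opening reduction is exactly the paper's: by \pref{pro:CBoptimal}, the proposition is equivalent to exhibiting a weighted $2$-design built from at most $\lceil\tfrac34(d-1)^2\rceil+1$ orthonormal bases, and your dispatch of the cases where $d$ or $d+1$ is a prime power (via complete MUB sets, resp.\ the $(d+2)$-basis designs of \rcite{RoyS07}) is fine, as is your identification of $d=14$ as the first dimension where neither applies. The problem is everything after that. The entire nontrivial content of the proposition is the existence, for \emph{arbitrary} $d$, of a weighted $2$-design on that many bases; the paper obtains this by invoking the explicit construction of Theorem~4.1 and Proposition~4.3 of \rcite{RoyS07}, which is precisely where the constant $\tfrac34$ comes from. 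You instead propose a Carath\'eodory argument, but as you yourself observe, Carath\'eodory applied to the full family $\{P(\caB)\}$ only yields $O(d^4)$ bases, so the whole burden falls on step (ii): producing a finite family $\caF$ of bases whose convex hull still contains $T=\tfrac1{d+1}(\id+d\ket{\Phi}\bra{\Phi})$ while its affine hull has dimension at most $\lceil\tfrac34(d-1)^2\rceil$. You describe a candidate seed (quadratic-phase bases, tensor products of MUB over the prime-power factors) plus unspecified ``correction bases,'' but you never construct the corrections, never verify that $T$ remains in the convex hull of the restricted family, and never carry out the dimension count that would force the constant $\tfrac34$ --- indeed you explicitly flag this as the unresolved ``main obstacle.'' That is not a proof of the proposition; it is a research plan for reproving the Roy--Scott result, with its decisive step left open.

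Two smaller points. First, there is no a priori reason the affine hull of a natural seed-plus-corrections family should land at exactly $\lceil\tfrac34(d-1)^2\rceil$; in \rcite{RoyS07} that number arises from a specific counting argument tied to their construction, not from a slice-dimension bound, so even the shape of your step (ii) may be the wrong mechanism. Second, the quadratic-phase seed $\{\sum_j\omega^{cj^2+kj}\ket{j}\}$ only yields MUB-like behavior when $d$ is odd and the relevant Gauss sums behave, which is exactly where composite $d$ causes the ``extra solutions'' you mention --- so the residual discrepancy you would need to cancel is not obviously small or structured. The clean fix is simply to cite the Roy--Scott construction for general $d$, as the paper does, and reserve your case analysis as a remark that far fewer bases suffice when $d$ or $d+1$ is a prime power.
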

\begin{proof}
According to Theorem~4.1 and Proposition~4.3 in \rcite{RoyS07}, a (weighted) 2-design can be constructed explicitly from $\lceil\frac{3}{4}(d-1)^2\rceil+1$ bases, from which we can devise an optimal verification protocol with $\lceil\frac{3}{4}(d-1)^2\rceil+1$ distinct CB tests by \pref{pro:CBoptimal}. 
\end{proof}

Thanks to \pref{pro:OptCBTnum},  the number of tests required to verify $|\Phi\>$ within infidelity $\epsilon$ and significance level $\delta$ can always attain the lower bound in \eref{eq:OptimalBoundN}. \Psref{pro:CBparsimonious} \ref{pro:CBoptimal}, and \ref{pro:OptCBTnum}  highlight the significance of MUB and 2-designs for the verification, fidelity estimation, and entanglement detection of the maximally entangled state $|\Phi\>$.

Next, we shall show that all parsimonious strategies based on LP measurements (that is, projective measurements on  product bases) and  all optimal strategies (including perfect strategies) based on SP measurements are actually CB strategies. These results further strengthen the significance of MUB and 2-designs. The following two theorems are proved in \aref{sec:OptStraProof}.
\begin{theorem}\label{thm:LPparsi}
An LP strategy  $\{P_l, p_l\}_{j=1}^g$ with $g\geq2$ is parsimonious iff  $p_l=1/g$ and $P_l=P(\caB_l)$, where the  $g$ bases $\caB_1,\caB_2, \ldots,\caB_g$  are mutually unbiased.  The strategy is perfect iff, in addition, $g=d+1$, so that $\caB_1, \caB_2,\ldots, \caB_g$ form a complete set of MUB.
\end{theorem}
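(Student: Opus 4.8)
The plan is to reduce the statement to \lref{lem:OptOrth} and \pref{pro:OrthMUB}, supplying as the one new ingredient a rigidity fact: in a parsimonious LP strategy every test operator is forced into the minimal conjugate-basis form $P(\caB_l)$. Since each LP test $\{P_l,\id-P_l\}$ is a separable projective test (both $P_l$ and $\id-P_l$ are sums of product projectors), an LP strategy is in particular an SP strategy, so \lref{lem:OptOrth} applies: an LP strategy with $g\ge 2$ distinct tests is parsimonious iff $p_l=1/g$ for all $l$ and the projectors $\bar{P}_l:=P_l-|\Phi\>\<\Phi|$ are mutually orthogonal. This immediately gives sufficiency: if $p_l=1/g$ and $P_l=P(\caB_l)$ with $\caB_1,\dots,\caB_g$ mutually unbiased, parsimony is exactly \pref{pro:CBparsimonious}. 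For the ``perfect'' clause, an SP strategy is perfect iff it is simultaneously optimal and parsimonious; parsimony fixes $\beta(\Omega)=1/g$ while optimality forces $\beta(\Omega)=1/(d+1)$ by \eref{eq:betanuBounds}, so among parsimonious strategies perfectness is equivalent to $g=d+1$, and $d+1$ pairwise unbiased bases in dimension $d$ are by definition a complete set of MUB.

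For the necessity direction, assume the LP strategy is parsimonious, so $p_l=1/g$ and the $\bar{P}_l$ are mutually orthogonal; it remains to show each $P_l$ equals some $P(\caB_l)$. Fix $l$ and write $P_l=\sum_{(i,j)\in S}|a_i\>\<a_i|\otimes|c_j\>\<c_j|$ with Alice's basis $\{|a_i\>\}$, Bob's basis $\{|c_j\>\}$ and a pass set $S$. With $M_{ij}:=\<a_i^*|c_j\>$ a $d\times d$ unitary, one has $\<a_ic_j|\Phi\>=d^{-1/2}\overline{M_{ij}}$, hence $\sum_{ij}|\<a_ic_j|\Phi\>|^2=1=\<\Phi|\Phi\>$; therefore $P_l|\Phi\>=|\Phi\>$ (equivalently $\<\Phi|P_l|\Phi\>=1$, since $P_l$ is a projector) forces $\supp(M)\subseteq S$. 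Put $V:=\operatorname{Im}(P_l)=\operatorname{span}\{|a_ic_j\>:(i,j)\in S\}\ni|\Phi\>$. Mutual orthogonality of the $\bar{P}_m$ with $m\ne l$, together with $\operatorname{Im}(P_m)=\operatorname{Im}(\bar{P}_m)\oplus\mathbb{C}|\Phi\>$, shows $\operatorname{Im}(P_m)\subseteq W:=V^{\perp}\oplus\mathbb{C}|\Phi\>$; in particular $|\Phi\>$ and every product state contained in some $\operatorname{Im}(P_m)$ lie in $W$. I would then prove the key claim: if $S\supsetneq\supp(M)$, or if $M$ is not monomial, then every product state in $W$ lies in $V^{\perp}$. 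Granting this, the product states spanning $\operatorname{Im}(P_m)$ would all lie in $V^{\perp}$, so $\operatorname{Im}(P_m)\subseteq V^{\perp}$ for each of the (existing, since $g\ge2$) indices $m\ne l$, contradicting $|\Phi\>\in\operatorname{Im}(P_m)\setminus V^{\perp}$. Hence $S=\supp(M)$ and $M$ is monomial, $M_{ij}=\mu_i\delta_{j,\pi(i)}$ with $|\mu_i|=1$ and $\pi$ a permutation; this means $|c_{\pi(i)}\>=\mu_i|a_i^*\>$, and substituting back gives $P_l=\sum_i|a_i\>\<a_i|\otimes|a_i^*\>\<a_i^*|=P(\caB_l)$ with $\caB_l=\{|a_i\>\}$. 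Finally \pref{pro:OrthMUB} turns the mutual orthogonality of the $\bar{P}(\caB_l)$ into mutual unbiasedness of the $\caB_l$, finishing the proof.

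To prove the key claim, write a product state $|a\>\otimes|c\>\in W$ in coordinates $x_i:=\<a_i|a\>$, $y_j:=\<c_j|c\>$; the condition $\Pi_V|ac\>=\alpha|\Phi\>$ becomes $x_iy_j=\alpha\,d^{-1/2}\overline{M_{ij}}$ for all $(i,j)\in S$. Suppose $\alpha\ne0$. On $\supp(M)$ the right-hand side is nonzero, so $x_iy_j\ne0$ there, and since the unitary $M$ has no zero row or column all $x_i$ and all $y_j$ are nonzero. If $S\supsetneq\supp(M)$, an equation at $(i_1,j_1)\in S\setminus\supp(M)$ reads $x_{i_1}y_{j_1}=0$, a contradiction. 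If $S=\supp(M)$ but $M$ is not monomial, the bipartite zero-pattern graph of $M$ has a connected block on index sets $I,J$ with $|I|=|J|\ge2$ on which $M$ restricts to a genuine unitary of rank $\ge 2$; on that block the equations would force this unitary to agree, on its support, with the rank-$\le1$ pattern $(x_iy_j)$, which fails because a connected block with $\ge2$ rows always contains a cycle along which the rank-one constraint cannot hold (equivalently, it would make $M$ a rank-one pattern on its support, but then two rows sharing a column could not be orthogonal). So $\alpha=0$ and $|ac\>\in V^{\perp}$. The main obstacle is precisely this rigidity step — deducing from the product-basis structure of $P_l$ and the orthogonality of the remaining $\bar{P}_m$ that $P_l$ must be a minimal conjugate-basis test; within it the delicate point is the non-monomial subcase, which relies on the block decomposition of a unitary matrix by its zero pattern and on the incompatibility of a nontrivial unitary block with a rank-one outer-product pattern.
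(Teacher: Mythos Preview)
Your argument is correct and reaches the same conclusion, but by a genuinely different route than the paper. Both proofs reduce sufficiency to \pref{pro:CBparsimonious} and start necessity from \lref{lem:OptOrth}, obtaining $p_l=1/g$ and mutual orthogonality of the $\bar P_l$. The divergence is in the rigidity step, forcing each $P_l$ to collapse to a conjugate-basis projector $P(\caB_l)$.

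The paper proceeds modularly. It first proves an intermediate structural fact (\lref{lem:LoProjOpt}): any LP projector $P\ge|\Phi\>\<\Phi|$ diagonal in $\caB\times\caB'$ automatically satisfies $P\ge P(\caB)$. From this, orthogonality of $\bar P_1,\bar P_2$ forces orthogonality of $\bar P(\caB_1),\bar P(\caB_2)$, hence mutual unbiasedness of $\caB_1,\caB_2$ by \pref{pro:OrthMUB}; a short trace computation $1=\tr(P_1P_2)\ge\tr[P_1P(\caB_2)]=\rank(P_1)/d$ then pins down the rank of $P_1$ to $d$, so $P_1=P(\caB_1)$. This packaging (\lref{lem:OrthMUB}) is clean, reusable, and avoids any case analysis on the zero pattern of a transition matrix.

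Your approach instead analyzes the transition unitary $M_{ij}=\<a_i^*|c_j\>$ directly and shows that if $S\supsetneq\supp(M)$ or $M$ is not monomial, then $W=V^\perp\oplus\mathbb{C}|\Phi\>$ contains no product state outside $V^\perp$, contradicting the existence of another LP test. The contradiction in the non-monomial case is most cleanly the one you put in parentheses: two rows of a connected unitary block sharing a column cannot be orthogonal once they agree with a rank-one pattern on their common support. (Your ``cycle'' phrasing is also justifiable---a connected $k\times k$ unitary block with $k\ge2$ has no leaf vertex, hence is not a tree---but the orthogonality version is the sharper statement and the one that actually does the work.) This route is self-contained and avoids introducing the intermediate inequality $P\ge P(\caB)$, at the price of a slightly more delicate combinatorial argument and an implicit appeal to the block decomposition of a unitary by its zero pattern.
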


\begin{remark}
	An LP strategy  is a  strategy 
	based on LP measurements, in which each test is realized by performing an LP measurement and then selecting suitable outcomes; cf.~\eref{eq:LPprojector} in the Appendix. By definition, each LP test projector
	on $\caH^{\otimes2}$ is diagonal in some product basis.
\end{remark}

\begin{theorem}\label{thm:Opt2design}
An SP strategy  $\{P_l, p_l\}_{j=1}^g$ is optimal iff there exist $g$ bases $\caB_1,\caB_2, \ldots,\caB_g$ such that $P_l=P(\caB_l)$ and $\{\caB_l,p_l \}_l$ forms a 2-design. 
The strategy is perfect iff, in addition, $g=d+1$, $p_l=1/(d+1)$, and $\caB_1,\caB_2, \ldots,\caB_g$ form a complete set of MUB. 
\end{theorem}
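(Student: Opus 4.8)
The plan is to characterize when an SP strategy $\{P_l,p_l\}_{l=1}^g$ is optimal, i.e.\ when $\Omega=\sum_l p_l P_l$ saturates the bound $\beta(\Omega)\ge 1/(d+1)$ of \eref{eq:betanuBounds}, equivalently when $\Omega=(1+d|\Phi\>\<\Phi|)/(d+1)$ as in \eref{eq:OptOmega}. By \pref{pro:CBoptimal} it suffices to show that each test projector $P_l$ of an \emph{optimal} SP strategy must in fact be a conjugate-basis projector $P(\caB_l)$ for some orthonormal basis $\caB_l$ of $\caH$; the ``if'' direction and the translation to the 2-design condition are then immediate from \pref{pro:CBoptimal} and \eref{eq:2designAlt}, and the ``perfect'' clause follows by combining with \thref{thm:LPparsi} (or \pref{pro:CBparsimonious} together with \lref{lem:OptOrth}), since a perfect strategy is optimal and parsimonious.

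The core step is therefore: \emph{if $\Omega=(1+d|\Phi\>\<\Phi|)/(d+1)$ and each $P_l$ is a separable projector with $P_l|\Phi\>=|\Phi\>$, then each $P_l$ has the form $P(\caB_l)$}. First I would use positivity: since $0\le p_l P_l\le \Omega$ for every $l$ (the other terms $p_{l'}P_{l'}$ being positive), each $\bar P_l := P_l-|\Phi\>\<\Phi|$ satisfies $0\le p_l \bar P_l \le \beta(\Omega)(\id-|\Phi\>\<\Phi|) = \frac{1}{d+1}(\id-|\Phi\>\<\Phi|)$, so the support of $\bar P_l$ lies in $|\Phi\>^{\perp}$ and $\bar P_l$ is a genuine projector there. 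Next I would exploit separability of $P_l$ together with the rank/trace constraint: $P_l$ being separable and satisfying $P_l\ge|\Phi\>\<\Phi|$ forces $\tr(P_l)\ge T(\Phi)=d$ by \eref{eq:RoE3} and \eref{eq:RoEbip1}; but optimality of $\Omega$ means $\tr(\Omega)=\sum_l p_l\tr(P_l)=\tr[(1+d|\Phi\>\<\Phi|)/(d+1)]=\frac{d^2+d}{d+1}=d$, so in fact $\tr(P_l)=d$ for every $l$, i.e.\ each $P_l$ is a rank-$d$ separable projector that is simultaneously a ``minimal'' separable test for $|\Phi\>$. The task is then to show that the only rank-$d$ projectors $P$ with $P\ge |\Phi\>\<\Phi|$ that are separable are exactly the $P(\caB)$.

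To pin this down I would analyze a rank-$d$ projector $P$ with $P|\Phi\>=|\Phi\>$ directly. Write $P=\sum_{a=1}^{d}|v_a\>\<v_a|$ for an orthonormal set $\{|v_a\>\}\subset\caH\otimes\caH$ with $|\Phi\>$ in their span; using the Choi/vectorization isomorphism $|v_a\>\leftrightarrow M_a$ (a $d\times d$ matrix, so that $|\Phi\>\leftrightarrow \id/\sqrt d$), orthonormality becomes $\tr(M_a^\dagger M_b)=\delta_{ab}$ and $|\Phi\>\in\mathrm{span}$ means $\frac{1}{\sqrt d}\id=\sum_a c_a M_a$; one can rotate within the span so that $M_1=\id/\sqrt d$ and $\tr M_a=0$ for $a\ge 2$. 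Separability of $P$ is equivalent to $P$ being expressible as a nonnegative combination of product pure states, which—because $P$ is a projector of rank equal to the number of product terms needed—forces $P=\sum_{j=1}^d |x_j\>\<x_j|\otimes|y_j\>\<y_j|$ with $\{|x_j\>\<x_j|\otimes|y_j\>\<y_j|\}$ an orthogonal family of rank-one projectors summing to a projector; orthogonality plus the fact that $|\Phi\>$ (maximally mixed marginals) lies in the range then forces $\{|x_j\>\}$ to be an orthonormal basis $\caB$ of $\caH$, $\{|y_j\>\}$ the conjugate basis $\caB^*$ with the pairing matched, i.e.\ $P=P(\caB)$. The main obstacle I anticipate is precisely this last rigidity argument—ruling out ``non-aligned'' separable decompositions and showing the product vectors must organize into a basis/conjugate-basis pair rather than some skew configuration; I expect this to hinge on the PPT constraint for $P$ combined with $P|\Phi\>=|\Phi\>$ and rank saturation, and it is the step I would write out most carefully (likely the bulk of \aref{sec:OptStraProof}). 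Once the per-test classification $P_l=P(\caB_l)$ is established, optimality of $\Omega$ reduces via \pref{pro:CBoptimal} to $\{\caB_l,p_l\}_l$ being a 2-design, and the perfect case is the minimal ($g=d+1$, uniform weights, complete MUB) instance by \pref{pro:CBparsimonious}.
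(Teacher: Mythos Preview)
Your overall architecture matches the paper's proof exactly: reduce optimality to $\tr(\Omega)=d$, use $\tr(P_l)\ge T(\Phi)=d$ to force $\rank(P_l)=d$ for every $l$, then classify rank-$d$ separable projectors $P$ with $P\ge|\Phi\>\<\Phi|$, and finally invoke \pref{pro:CBoptimal} and \pref{pro:CBparsimonious}. The paper packages the classification step as \lref{lem:Pbasis}.

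The gap is in your sketch of that classification. You assert that separability together with ``rank equal to the number of product terms needed'' forces $P=\sum_{j=1}^d|x_j\>\<x_j|\otimes|y_j\>\<y_j|$ with the product terms \emph{orthogonal}. This is not justified: a separable operator may require more product terms than its rank, and even when the number of terms equals the rank the terms need not be orthogonal. (For a cautionary example, the symmetric projector $(\id+F)/2$ on $\bbC^2\otimes\bbC^2$ is a rank-$3$ separable projector whose range contains no orthonormal product basis, since the only product vectors in it are $|\psi\>^{\otimes2}$.) Your subsequent appeal to the PPT constraint does not obviously close this gap either.

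The paper's argument for \lref{lem:Pbasis} avoids the issue entirely and is much shorter than what you anticipate. Take \emph{any} separable decomposition $P=\sum_j c_j|\varphi_j\>\<\varphi_j|\otimes|\phi_j\>\<\phi_j|$ with $c_j>0$ and $\sum_j c_j=\tr(P)=d$. Then
\[
1=\<\Phi|P|\Phi\>=\frac{1}{d}\sum_j c_j|\<\phi_j^*|\varphi_j\>|^2\le\frac{1}{d}\sum_j c_j=1,
\]
so equality forces $|\phi_j\>=|\varphi_j^*\>$ for \emph{every} $j$, with no assumption on the number of terms or their orthogonality. Now pick $d$ of the $|\varphi_j\>\otimes|\varphi_j^*\>$ spanning the support of $P$ and expand $|\Phi\>=\sum_{j=1}^d a_j|\varphi_j\>\otimes|\varphi_j^*\>$; this identity (via the vectorization map, $\id=\sqrt{d}\sum_j a_j|\varphi_j\>\<\varphi_j|$) forces $\{|\varphi_j\>\}_{j=1}^d$ to be orthonormal and $a_j=1/\sqrt{d}$, whence $P=P(\caB)$. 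The Choi picture you set up is useful for this last step, but the decisive trick is the inequality above, not an orthogonal-decomposition claim.
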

\Thref{thm:Opt2design} is quite surprising given that no obvious bases are involved in the definition of an SP strategy. In addition to the applications in  state verification, \thref{thm:Opt2design} also sheds light on the existence problem on MUB. 
\begin{corollary}
There exists a complete set of MUB in dimension $d$ iff there exist $d+1$ separable projectors $P_1, P_2,\ldots, P_{d+1}$ such that $P_j\geq |\Phi\>\<\Phi|$ for all $j$ and that $P_j-|\Phi\>\<\Phi|$ are mutually orthogonal.
\end{corollary}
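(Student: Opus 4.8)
The plan is to deduce the corollary from \Thref{thm:Opt2design} in one direction and \Pref{pro:OrthMUB} in the other, with the optimal homogeneous operator of \eref{eq:OptOmega} serving as the bridge. For the easy direction, suppose $\caB_1,\dots,\caB_{d+1}$ is a complete set of MUB and set $P_j:=P(\caB_j)$ as in \eref{eq:CBproj}. Each $P_j$ is a sum of product projectors and hence separable, it is a rank-$d$ projector satisfying $P_j|\Phi\>=|\Phi\>$ (so $P_j\ge|\Phi\>\<\Phi|$), and by \Pref{pro:OrthMUB} the operators $\bar P_j=P_j-|\Phi\>\<\Phi|$ are mutually orthogonal precisely because the bases are mutually unbiased. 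These are the required $d+1$ projectors.

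For the converse, suppose $P_1,\dots,P_{d+1}$ are separable projectors with $P_j\ge|\Phi\>\<\Phi|$ and $\bar P_j:=P_j-|\Phi\>\<\Phi|$ mutually orthogonal. The first step is to pin down the rank of each $P_j$. Since $P_j\ge|\Phi\>\<\Phi|$ and $P_j$ is a projector, one has $\<\Phi|P_j|\Phi\>=1$ and $\bar P_j$ is a projector orthogonal to $|\Phi\>\<\Phi|$; writing $P_j=\sum_i c_i|\phi_i\chi_i\>\<\phi_i\chi_i|$ with $c_i\ge0$ and unit product vectors, and using $|\<\Phi|\phi_i\chi_i\>|^2=|\<\phi_i^*|\chi_i\>|^2/d\le 1/d$, one obtains $1=\<\Phi|P_j|\Phi\>\le\tr(P_j)/d$, i.e.\ $\tr(P_j)\ge d$ (equivalently $\tr(P_j)\ge T(\Phi)=d$ by \eref{eq:RoEbip1}). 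Hence each $\bar P_j$ has rank at least $d-1$. Now $|\Phi\>\<\Phi|,\bar P_1,\dots,\bar P_{d+1}$ are $d+2$ mutually orthogonal projectors on a space of dimension $d^2$, so $1+\sum_j\rank(\bar P_j)\le d^2$; comparing with $\sum_j\rank(\bar P_j)\ge(d+1)(d-1)=d^2-1$ forces equality throughout, so $\rank(P_j)=d$ for every $j$ and $|\Phi\>\<\Phi|+\sum_j\bar P_j=\id$. Consequently the verification operator $\Omega:=\frac1{d+1}\sum_j P_j$ equals $|\Phi\>\<\Phi|+\frac1{d+1}(\id-|\Phi\>\<\Phi|)=\frac{1+d|\Phi\>\<\Phi|}{d+1}$, which is the optimal strategy \eref{eq:OptOmega}; moreover $\beta(\Omega)=1/(d+1)=1/g$ for the $g=d+1$ distinct tests $P_j$ (the $P_j$ are distinct since the nonzero $\bar P_j$ are mutually orthogonal), so the strategy is both optimal and parsimonious, i.e.\ perfect. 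Invoking \thref{thm:Opt2design} (whose proof uses only separability of the test projectors) then yields orthonormal bases $\caB_1,\dots,\caB_{d+1}$ with $P_j=P(\caB_j)$, and since the strategy is perfect these $\caB_j$ form a complete set of MUB. This finishes the converse.

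The main obstacle is not the counting above but the structural input hidden in the proof of \thref{thm:Opt2design}: that a rank-$d$ separable projector whose range contains $|\Phi\>$ is necessarily of the conjugate-basis form $P(\caB)$ for some orthonormal basis $\caB$. I would invoke this rather than reprove it. A secondary, bookkeeping point is that the hypothesis supplies separability of $P_j$ but not of $\id-P_j=\sum_{k\ne j}\bar P_k$; this is harmless because the part of \thref{thm:Opt2design} that is actually used requires only separability of the test projectors themselves, and a posteriori, once $P_j=P(\caB_j)$ is established, $\id-P(\caB_j)$ is manifestly separable, so the strategy is genuinely SP.
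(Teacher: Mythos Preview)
Your proof is correct and follows essentially the route the paper intends: the corollary is stated immediately after \thref{thm:Opt2design} with no separate proof, so the implicit argument is precisely to feed the hypothesized projectors into that theorem. Your dimension-counting step to force $\rank(P_j)=d$ is a slight variant of what the proof of \thref{thm:Opt2design} does (there one uses $\tr(\Omega)=d$ together with \lref{lem:SepProjRank} to get $\tr(P_l)=d$), but the content is the same. You are also right to flag and resolve the gap between the corollary's hypothesis (only $P_j$ separable) and the formal definition of an SP strategy (both $P_j$ and $\id-P_j$ separable): the proof of \thref{thm:Opt2design} in the appendix indeed uses only the separability of the $P_l$, via \lref{lem:SepProjRank} and \lref{lem:Pbasis}, so your invocation is justified. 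Alternatively, as you note, once $\rank(P_j)=d$ is established one can bypass \thref{thm:Opt2design} entirely and apply \lref{lem:Pbasis} directly to get $P_j=P(\caB_j)$, then conclude via \pref{pro:OrthMUB}; this is arguably the cleanest route and makes the SP issue moot.
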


\subsection{\label{sec:EntDetectNA}Applications to entanglement detection}
Note that  $\rho$ is entangled when the fidelity $F(\rho,|\Phi\>\<\Phi|)$ is larger than $1/d$. Given a verification strategy $\Omega$, to certify the entanglement of $|\Phi\>$ with significance level $\delta$,
the number of tests is given by \eref{eq:NumberTest} with $\epsilon=(d-1)/d$; that is,
\begin{equation}\label{eq:EDnumTest}
N_\rmE=\biggl\lceil\frac{ \ln \delta}{\ln[1-d^{-1}(d-1)\nu(\Omega)]}\biggr\rceil\geq \biggl\lceil\frac{\ln\delta^{-1}}{\ln d}\biggr\rceil.
\end{equation}
In particular, 
passing a single test can
guarantee the significance level 
$1-\frac{d-1}{d}\nu(\Omega)$
in the nonadversarial scenario, which also follows from  \eref{eq:OneTestCon}. The  bound in \eref{eq:EDnumTest} can be attained by the strategy $\Omega=|\Phi\>\<\Phi|$ based on the entangling measurement
$\{|\Phi\>\<\Phi|, 1-|\Phi\>\<\Phi|\}$.  In this case, the entanglement of $|\Phi\>$ can be certified using only one test for any given significance level $\delta$
if the local dimension satisfies the condition
 $d\geq \delta^{-1}$.

In practice, it is more convenient to apply strategies based on local measurements. 
If  $\Omega$ is the  optimal local strategy with $\nu(\Omega)=d/(d+1)$ (for example, the strategy based on a complete set of MUB), then \eref{eq:EDnumTest} reduces to 
\begin{equation}\label{eq:EDnumTestOpt}
N_\rmE=\biggl\lceil\frac{ \ln \delta}{\ln2-\ln(d+1)}\biggr\rceil.
\end{equation}
Surprisingly, the entanglement of $|\Phi\>$ can be certified with only one test based on local measurements  when
\begin{equation}\label{eq:OneTestLOCC}
d\geq 2\delta^{-1}-1,
\end{equation}
as illustrated in \fref{fig:ED}.
If $\Omega$ is a parsimonious strategy with $g$ distinct tests, then  $\nu(\Omega)=(g-1)/g$, so that \eref{eq:EDnumTest} reduces to 
\begin{equation}\label{eq:EDnumTestPasi}
N_\rmE=\biggl\lceil\frac{ \ln \delta}{\ln(g+d-1)-\ln(gd)}\biggr\rceil,
\end{equation}
which approaches $\lfloor\ln\delta^{-1}/\ln g\rfloor+1$
in the large-$d$ limit. Again,  one test is sufficient when $g>\delta^{-1}$ and $d$ is  large enough.

Recently, \rcite{DimiD18} (see also \rcite{SaggDGR19}) showed that the entanglement of certain multipartite states, such as linear cluster states and tensor powers of the singlet,  can be certified using only one test. In general it is not easy to make a fair comparison between their results and our results because the problems considered and starting points in the two works are different. With this caveat in mind, we present the following observations. In the case of the singlet, the protocol in  \rcite{DimiD18} is similar to the protocol in \rcite{PallLM18} and is a special case of our general protocol based on  complete sets of MUB [see \eref{eq:MUBqubitTP}].  Reference~\cite{DimiD18} essentially demonstrates that the entanglement of the singlet can be certified to any given significance if the number of copies is large enough, though the collection of singlets is considered as a whole and single-copy detection means a single copy of such a collection.  According to Eq.~(10) in \rcite{DimiD18}, to achieve significance level $\delta$ 
 (corresponding to confidence level $1-\delta$), the number of  required tests
satisfies
\begin{equation}
N\geq  \frac{\ln \delta}{\ln \frac{2}{3}}=\frac{\log_2 \delta}{\log_2 \frac{2}{3}},
\end{equation}
which corresponds to a local dimension of 
\begin{equation}
2^N\geq  \delta^{-1/\log_2(3/2)}\approx\delta^{-1.71}
\end{equation}
if these singlets are considered as a bipartite maximally entangled state. This dimension is in general much larger than 
the counterpart in \eref{eq:OneTestLOCC} required by our protocol. When $\delta=0.05$ for example,   the smallest local dimension required by  \rcite{DimiD18} is  $2^N=256$, while it is only 39 for our optimal protocol.  Our protocol is more efficient because it involves collective measurements across different copies if the maximally entangled state is composed of many copies of two-qubit Bell states.

\begin{figure}
	\includegraphics[width=6.7cm]{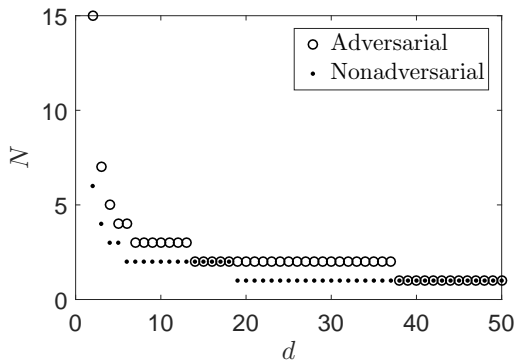}	
	\caption{\label{fig:ED}Certification of the entanglement of maximally entangled states in the adversarial scenario and nonadversarial scenario. Here $d$ is the local dimension and $N$ is the number of tests required to certify the entanglement with  significance level  $\delta=0.1$. In the nonadversarial scenario, the optimal strategy $\Omega$ with $\beta(\Omega)=1/(d+1)$ is applied [cf.~\eref{eq:OptOmega}]. In the adversarial scenario, a homogeneous strategy $\Omega'$ with $\beta(\Omega')=2/(d+1)$ is applied. In both scenarios,  one test is sufficient when $d$ is  large enough.}
\end{figure}

\section{Verification of maximally entangled states in the adversarial scenario}

\subsection{\label{sec:OVSadv}Optimal verification strategies}
In the adversarial scenario, the efficiency  of a verification strategy $\Omega$ will depend on smaller eigenvalues as well as $\beta(\Omega)$. In this case, singular verification strategies are not efficient for high-precision verification according to \esref{eq:NumberTestAdv} and \eqref{eq:NumberTestAdvEnt}, even for the strategy  $\Omega$ based on the  entangling test $\{|\Phi\>\<\Phi|, 1-|\Phi\>\<\Phi|\}$,
which is optimal for the nonadversarial scenario when there is no restriction on the measurements. Here we shall construct optimal protocols for the adversarial scenario.

If there is no restriction on the measurements, the optimal strategy can always be chosen to be homogeneous. In the  high-precision limit, a strategy $\Omega$ is optimal in the adversarial scenario if it is homogeneous with $\beta(\Omega)=1/\rme$ \cite{ZhuH19AdS, ZhuH19AdL}.  For the maximally entangled state $|\Phi\>$,  we can  construct a homogeneous strategy $\Omega$ with $\beta(\Omega)=1/(d+1)$ according to \sref{sec:OptNA}.
To construct the optimal strategy, it suffices to add the trivial test with a suitable probability $p$. Here  "trivial test" refers to the test associated with the identity operator, so that all states can pass the test for sure \cite{ZhuH19AdS, ZhuH19AdL}. Note that 
\begin{align}
|\Phi\>\<\Phi|+\lambda(1-|\Phi\>\<\Phi|)=(1-p)\frac{1+d|\Phi\>\<\Phi|}{d+1}+p
\end{align} 
if 
\begin{equation}
p=\frac{(d+1)\lambda-1}{d}.
\end{equation}
In this way,  any homogeneous strategy $\Omega$ that satisfies  $1/(d+1)\leq \beta(\Omega)< 1$  can be constructed by virtue of local projective measurements. In particular, we can construct a homogeneous strategy  $\Omega$ with $\beta(\Omega)=1/\rme$ by choosing 
$p=(d+1-\rme)/(\rme d)$.  Then the number of tests attains the minimum in the high-precision limit; that is,
\begin{equation}
N(\epsilon,\delta,\lambda)\approx\rme\epsilon^{-1}
\ln\delta^{-1}.
\end{equation}
In general, the  optimal value of $\beta(\Omega)$ depends on the target infidelity $\epsilon$ and significance level $\delta$. For high-precision verification, nevertheless, this  value is close to $1/\rme$ \cite{ZhuH19AdS, ZhuH19AdL}. Such strategies can also be constructed by virtue of local projective measurements.

\subsection{\label{sec:EntDetectAdv}Applications to entanglement detection}
Recall that a bipartite state $\rho$ is entangled whenever $\<\Phi|\rho|\Phi\>> 1/d$. 
Given a   strategy $\Omega$ for the maximally entangled state $|\Phi\>$, the number of tests required to certify its entanglement with significance level $\delta$ is  $N(\epsilon,\delta,\Omega)$ with $\epsilon=(d-1)/d$. Now the analysis in \sref{sec:OVSadv} is not so relevant because $\epsilon$ is quite large. Nevertheless, singular verification strategies are still not efficient when $\delta$ is small. 

If  $\Omega$ is the  strategy based on the entangling  test $\{|\Phi\>\<\Phi|, 1-|\Phi\>\<\Phi|\}$, then $\nu(\Omega)=1$, and the number of tests is given by \eref{eq:NumberTestAdvEnt} with $\epsilon=(d-1)/d$; that is, 
\begin{align}
N(\epsilon,\delta,\Omega)&=
\biggl\lceil\frac{d(1- \delta)}{(d-1)\delta }\biggr\rceil. \label{eq:NumberTestAdvEnt2}
\end{align}
When $\delta\ll1$, we have $N(\epsilon,\delta,\Omega)\approx d/[(d-1)\delta]$, so the number of tests is approximately inversely proportional to $\delta$. 
If  $\Omega$ is a parsimonious strategy composed of $g$ distinct tests with $g\leq d$ as constructed in \sref{sec:OptNA}, then 
the number of tests is determined by \eref{eq:NumberTestAdv} with $\nu(\Omega)=(g-1)/g$  and $\epsilon=(d-1)/d$; that is, 
\begin{align}
N(\epsilon,\delta,\Omega)=
\min\left\{
\biggl\lceil\frac{gd(1- \delta)}{(g-1)(d-1) \delta}\biggr\rceil,\; \biggl\lceil\frac{d}{(d-1)\delta}-1\biggr\rceil\right\}, 
\end{align}
which is approximately equal to the number in \eref{eq:NumberTestAdvEnt2}.

As in \sref{sec:OVSadv}, the optimal strategy for certifying the entanglement of $|\Phi\>$ can be chosen to be homogeneous. Given a homogeneous strategy $\Omega$  with $\beta(\Omega)=\lambda$,  the number of  required tests is $N(\epsilon,\delta,\lambda)$ presented in \eref{eq:NumTestHomo1} with $\epsilon=(d-1)/d$ \cite{ZhuH19AdL}. When $\delta\ll\lambda$, we have 
\begin{equation}
N(\epsilon,\delta,\lambda)\approx\frac{1+(d-1)\lambda}{(d-1)\lambda\ln\lambda}
\ln\delta
\end{equation}
according to \eref{eq:NumberTestAdvLS}.
The minimum of the right-hand side is attained when $\lambda$ is the unique solution, denoted by $\lambda_*$,  of the following equation
\begin{equation}
1+(d-1)\lambda+\ln \lambda=0. 
\end{equation}
It is not easy to derive an analytical formula for $\lambda_*$, but it is easy to compute $\lambda_*$ numerically. In addition, it is easy to prove that $\lambda_*\geq 1/(d+1)$ when $d\geq 4$, so the optimal strategy can be realized by LOCC. When $d=2,3$, the optimal value of $\lambda$ under LOCC is $1/(d+1)$. When $d\geq 3$ and $\lambda=1/(d-1)$, we have
\begin{equation}
N(\epsilon,\delta,\lambda)\approx\frac{2\ln \delta^{-1}}{\ln (d-1)},
\end{equation}
so the choice $\lambda=1/(d-1)$ is reasonably good for practical purposes. It should be pointed out that the above equation is derived under the assumption $\delta\ll \lambda=1/(d-1)$.

\begin{figure}
	\includegraphics[width=6.8cm]{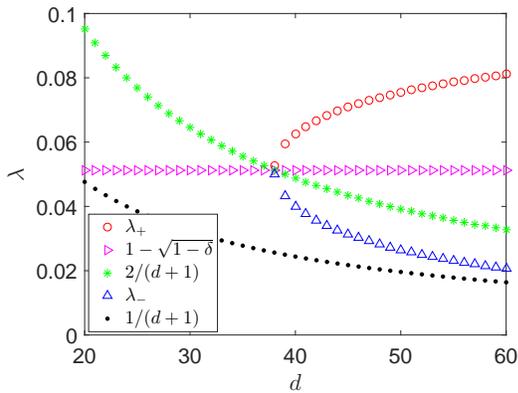}	
	\caption{\label{fig:EDoneTest}Entanglement certification of maximally entangled states  in the  adversarial scenario using only one test. Here the significance level is set at $\delta=0.1$. 
When the local dimension satisfies $d\geq38$, the entanglement can be certified by any homogeneous strategy $\Omega$ with $\lambda_-\leq \beta(\Omega)\leq \lambda_+$ [cf.~\esref{eq:betaCondition} and \eqref{eq:lambdapm}], including the specific choice $\beta(\Omega)=2/(d+1)$ or $\beta(\Omega)=1-\sqrt{1-\delta}$. 	The plot also shows that $\lambda_-\geq1/(d+1)$, which means  all such strategies can be realized by virtue of local projective measurements. 
}
\end{figure}

In the rest of this section we show that for any given significance level $0<\delta<1$, the entanglement of the maximally entangled state $|\Phi\>$ can be certified using only one test as long as the local dimension $d$ is large enough. To verify this claim, we may assume that  $0<\delta\leq 1/2$ without loss of generality because the number of tests cannot increase when $\delta$ increases. 
\begin{theorem}\label{thm:oneTestAdv}In the adversarial scenario,
	the entanglement of the $d\times d$ maximally entangled state $|\Phi\>$ can be certified with significance level $0<\delta\leq 1/2$ using only one test based on  a homogeneous strategy $\Omega$  iff 
	\begin{gather}
	d\geq d_*:= \biggl\lceil\frac{2+2\sqrt{1-\delta}-\delta}{\delta}\biggr\rceil, \label{eq:OneTestAdvME} \\
		\lambda_-\leq \beta(\Omega)\leq \lambda_+,\label{eq:betaCondition}
	\end{gather} 
where 
	\begin{equation}\label{eq:lambdapm}
	\lambda_\pm=\frac{(d+1)\delta\pm\sqrt{(d+1)^2\delta^2-4d\delta}}{2d}.
	\end{equation}
\end{theorem}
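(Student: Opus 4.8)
The plan is to reduce the claim to the closed-form expression for $F(1,\delta,\lambda)$ in \eref{eq:MinFidelityAdvN1} and then to elementary quadratic algebra. Since certifying the entanglement of $|\Phi\>$ is the same as verifying $|\Phi\>$ within infidelity $\epsilon=(d-1)/d$ (a bipartite state is entangled precisely when its fidelity with $|\Phi\>$ exceeds $1/d$), one test based on a homogeneous strategy $\Omega$ with $\lambda:=\beta(\Omega)\in[0,1)$ certifies entanglement at significance level $\delta$ iff $F(1,\delta,\lambda)\geq 1-\epsilon=1/d$. The assumptions $0<\delta\leq 1/2$ and $\lambda\geq 0$ give $\delta\leq 1/2\leq(1+\lambda)/2$, so only the first two branches of \eref{eq:MinFidelityAdvN1} are ever active; equivalently, by \eref{eq:OneTestConAdv}, which is both sufficient and (since $\delta\leq(1+\lambda)/2$) necessary here and which already encodes $\delta\geq\lambda$, the requirement is simply
\begin{equation}
\frac{\lambda(\delta-\lambda)}{\delta(1-\lambda)}\geq\frac{1}{d}.
\end{equation}

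The next step is to clear denominators. Because $\delta(1-\lambda)>0$ and because the inequality so obtained in turn forces $0<\lambda<\delta$, cross-multiplication is reversible, and the requirement becomes $q(\lambda)\leq 0$ with $q(\lambda):=d\lambda^{2}-(d+1)\delta\lambda+\delta$. Since $q$ opens upward, $\{\lambda:q(\lambda)\leq 0\}$ equals the closed interval $[\lambda_-,\lambda_+]$, with $\lambda_\pm$ exactly as in \eref{eq:lambdapm}, whenever the discriminant $\Delta:=(d+1)^{2}\delta^{2}-4d\delta$ is nonnegative, and is empty otherwise. Hence some homogeneous $\Omega$ works iff $\Delta\geq 0$, i.e.\ iff $(d+1)^{2}\delta\geq 4d$. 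I would rewrite this as $\delta d^{2}-(4-2\delta)d+\delta\geq 0$ and read it as a quadratic in $d$ with positive leading coefficient; its roots are $d_\pm=\bigl[(2-\delta)\pm 2\sqrt{1-\delta}\,\bigr]/\delta$, and the substitution $u=\sqrt{1-\delta}\in(0,1)$ collapses them to $d_-=(1-u)/(1+u)$ and $d_+=(1+u)/(1-u)=(2+2\sqrt{1-\delta}-\delta)/\delta$. Because $d_-<1$, no local dimension $d\geq 2$ can lie in the spurious range $d\leq d_-$; and because $d$ is an integer, $(d+1)^{2}\delta\geq 4d$ holds iff $d\geq\lceil d_+\rceil=d_*$, which is \eref{eq:OneTestAdvME}.

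Finally I would confirm that requiring $\beta(\Omega)\in[\lambda_-,\lambda_+]$ introduces no incompatibility with the regime used above, so that the characterization is exact: one checks $\lambda_->0$ (squaring $(d+1)\delta>\sqrt{\Delta}$ reduces to $4d\delta>0$) and $\lambda_+<\delta$ (squaring $\sqrt{\Delta}<(d-1)\delta$, legitimate for $d\geq 2$, reduces to $\delta<1$), so $[\lambda_-,\lambda_+]\subseteq(0,\delta)$ and every admissible $\lambda$ automatically lies in the second branch of \eref{eq:MinFidelityAdvN1}. A near-identical computation gives $\lambda_->1/(d+1)$, so all such strategies are in fact realizable by local projective measurements. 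Assembling these observations yields the stated equivalence. The only points needing care are the branch bookkeeping for \eref{eq:MinFidelityAdvN1} under $\delta\leq 1/2$ and the elimination of the spurious root $d_-$; the rest is routine, so I do not expect a genuine obstacle.
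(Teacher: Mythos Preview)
Your proof is correct and follows essentially the same route as the paper: both reduce to \eref{eq:OneTestConAdv} with $\epsilon=(d-1)/d$, rearrange to the quadratic $d\lambda^2-(d+1)\delta\lambda+\delta\leq 0$, and read off $\lambda_\pm$ as in \eref{eq:lambdapm}. The only cosmetic difference is that the paper extracts the threshold on $d$ by minimizing $\delta(1-\lambda)/[\lambda(\delta-\lambda)]$ over $\lambda$ (the minimum occurring at $\lambda=1-\sqrt{1-\delta}$), whereas you obtain it from the nonnegativity of the discriminant and then solve the resulting quadratic in $d$; these are equivalent algebraic manipulations and yield the same $d_*$.
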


\begin{proof}
Let $\lambda=\beta(\Omega)$. If the entanglement of $|\Phi\>$ can be certified with significance level $0<\delta\leq 1/2$ using only one test, then  $\lambda>0$ 
according to \eref{eq:NumberTestAdvEnt2}; see also  the conclusion presented after \eref{eq:NumberTestAdvEnt}.  In addition,
 \eref{eq:OneTestConAdv} with $\epsilon=(d-1)/d$ has to hold, which means $\lambda<\delta$ and	
\begin{equation}\label{eq:dimlambound}
	d\geq \frac{\delta (1-\lambda)}{\lambda(\delta-\lambda)}.
	\end{equation}
The minimum of the right side is attained when $\lambda=1-\sqrt{1-\delta}$, in which case the above equation reduces to 
	\begin{equation}
	d\geq \frac{2+2\sqrt{1-\delta}-\delta}{\delta}, 
	\end{equation}
	which implies \eref{eq:OneTestAdvME}.
In addition, \eref{eq:dimlambound} implies that 
	\begin{align}
	d\lambda^2-(d+1)\delta\lambda+\delta\leq 0,
	\end{align}
	from which we can deduce that  $\lambda_-\leq \lambda=\beta(\Omega)\leq \lambda_+$, which confirms \eref{eq:betaCondition}. 
	
	Conversely, if  $d\geq d_*$ and $\lambda_-\leq \lambda=\beta(\Omega)\leq \lambda_+$, then \esref{eq:OneTestConAdv} and \eqref{eq:dimlambound} hold. 
	Therefore, the homogeneous strategy $\Omega$  can be applied to certify the entanglement of $|\Phi\>$ with only one test. 
\end{proof}
Note that  the bound in \eref{eq:OneTestAdvME} is equivalent to the condition $\delta\geq 4d/(d+1)^2$. When this condition is satisfied, then $\lambda_+$ ($\lambda_-$) is monotonically increasing  (decreasing) in $d$ and $\delta$. 
In conjunction with the assumption $0<\delta\leq 1/2$, we can deduce that 
\begin{align}
\frac{1}{d+1}&< \frac{1}{d}< \frac{d+1-\sqrt{d^2-6d+1}}{4d}\leq \lambda_-\nonumber\\
&\leq\lambda_+\leq \frac{d+1+\sqrt{d^2-6d+1}}{4d}< \frac{d-1}{2d},\label{eq:lambdapmBound} \\
&\frac{\delta}{d}<\lambda_-\leq \lambda_+<\delta. \label{eq:lambdapmBound2}
\end{align}
In addition, we have 
\begin{equation}\label{eq:lambdapmBound3}
\lambda_-\leq \frac{2}{d+1}\leq  1-\sqrt{1-\delta}\leq \lambda_+,
\end{equation}
where the middle inequality 
is saturated when the inequality $\delta\geq 4d/(d+1)^2$ is saturated, in which case all the inequalities in \eref{eq:lambdapmBound3} are saturated (cf.~\fref{fig:EDoneTest}).

If  \eref{eq:OneTestAdvME} is not  satisfied, then the entanglement of  $|\Phi\>$ cannot be certified  with only one test even if entangling measurements are accessible given that the  optimal performance can always be achieved by a  homogeneous strategy. 
Conversely, if \eref{eq:OneTestAdvME} is satisfied, then the entanglement of  $|\Phi\>$ can be certified  with only one test by a  homogeneous strategy constructed from local projective measurements. Actually, all homogeneous strategies that can certify the entanglement with one test can be realized by local measurements thanks to \eref{eq:lambdapmBound}. Notably, in view of  \eref{eq:lambdapmBound3},  the homogeneous strategy $\Omega$ with $\beta(\Omega)=1-\sqrt{1-\delta}$ or with $\beta(\Omega)=2/(d+1)$ can achieve the optimal performance and can be realized by local measurements.
For example, the entanglement of $|\Phi\>$ can be certified with significance level $\delta=0.1$ using one test iff  $d\geq 38$, as illustrated in  \fsref{fig:ED} and \ref{fig:EDoneTest}. When  $\delta\ll1$, we have $d_*\approx (4/\delta)-2$, so the threshold dimension is about two times the counterpart  $(2/\delta) -1$ for the nonadversarial scenario [cf.~\eref{eq:OneTestLOCC}].

\section{Summary}
We studied systematically efficient verification of maximally entangled states based on local projective measurements. 
We proved that  optimal strategies are in one-to-one correspondence with weighted complex projective 2-designs, while perfect strategies are in one-to-one correspondence with complete sets of MUB. Based on this observation, optimal protocols are constructed for maximally entangled states of any local dimension, and  near-optimal protocols are constructed using only three measurement settings. Besides state verification, these protocols are also very useful for fidelity estimation and entanglement detection. Moreover, our approach can be applied to the adversarial scenario. In this case, we can construct protocols based on local projective measurements that are optimal even among protocols that can access entangling measurements. In addition, we proved that the entanglement of the maximally entangled state can be certified with any given significance level using only one test when the local dimension is large enough.
Our work is of interest not only to practical quantum information processing, but also to foundational studies on the connections between quantum states and quantum measurements.

\bigskip

\bigskip
\section*{Acknowledgements}
HZ is grateful to Zihao Li and  Yun-Guang Han for comments.
This work is  supported by  the National Natural Science Foundation of China (Grant No. 11875110).
MH was supported in part by
Fund for the Promotion of Joint International Research (Fostering Joint International Research) Grant No. 15KK0007,
Japan Society for the Promotion of Science (JSPS) Grant-in-Aid for Scientific Research (A) No. 17H01280, (B) No. 16KT0017, and Kayamori Foundation of Informational Science Advancement.

\appendix
\section{\label{sec:OptStraProof}Proofs of \thref{thm:LPparsi} and \thref{thm:Opt2design}}
\begin{proof}[Proof of \thref{thm:LPparsi}]
	The "if" part of the theorem follows from \pref{pro:CBparsimonious}. Conversely, if the LP strategy  $\{P_l, p_l\}_{j=1}^g$ is parsimonious,  then  $p_l=1/g$  and $\bar{P}_l$ are mutually orthogonal according to \lref{lem:OptOrth}. In view of  \lref{lem:OrthMUB} below,  there must exist $g$ bases $\caB_1,\caB_2, \ldots,\caB_g$ that are mutually unbiased and such that $P_l=P(\caB_l)$. Therefore, the LP strategy is actually a CB strategy, which is perfect iff the bases form a complete set of MUB.
\end{proof}

\begin{proof}[Proof of \thref{thm:Opt2design}]
	The "if" part follows from  \psref{pro:CBoptimal} and \ref{pro:CBparsimonious}. Conversely, if the SP strategy  $\{P_l, p_l\}_{j=1}^g$  is optimal, then $\sum_l p_l\tr(P_l)=d$, which implies that all $P_l$ have rank $d$ by     \lref{lem:SepProjRank} below.
	So each $P_l$ has the form $P_l=P(\caB_l)$ for some  basis $\caB_l$ by  \lref{lem:Pbasis} below, and the SP strategy is actually a CB strategy. Now the theorem follows from  \psref{pro:CBoptimal} and \ref{pro:CBparsimonious}. 
\end{proof}

\begin{lemma}\label{lem:SepProjRank}
	Any separable projector $P\geq |\Phi\>\<\Phi|$ has rank at least $d$. 
\end{lemma}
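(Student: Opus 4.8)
The plan is to bypass the machinery of $T(\Psi)$ and argue directly from the fact that the squared fidelity between $|\Phi\>$ and any pure product state is at most $1/d$. By the definition of a separable operator, write $P=\sum_k c_k\,|\phi_k\>\<\phi_k|$ with $c_k\geq0$ and (after rescaling) $|\phi_k\>=|a_k\>\otimes|b_k\>$ unit product vectors. Taking the trace gives $\sum_k c_k=\tr(P)$, and since $P$ is a projector, $\tr(P)=\rank(P)$.

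The whole argument then reduces to one scalar inequality. On one hand, $|\Phi\>\<\Phi|\leq P\leq\id$ immediately forces $\<\Phi|P|\Phi\>=1$ (the lower bound gives $\geq\<\Phi|\Phi\>^2=1$, the upper bound gives $\leq1$). On the other hand, for a unit product vector $|\phi\>=|a\>\otimes|b\>$ we have $\<\Phi|\phi\>=\frac{1}{\sqrt d}\sum_j\<j|a\>\<j|b\>$, so by the Cauchy--Schwarz inequality (equivalently, because every Schmidt coefficient of $|\Phi\>$ equals $1/\sqrt d$) $|\<\Phi|\phi\>|^2\leq1/d$. Combining these,
\begin{equation}
1=\<\Phi|P|\Phi\>=\sum_k c_k\,|\<\Phi|\phi_k\>|^2\leq\frac1d\sum_k c_k=\frac{\tr(P)}{d}=\frac{\rank(P)}{d},
\end{equation}
whence $\rank(P)\geq d$.

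I do not anticipate a genuine obstacle; the argument is essentially one line once the right quantity is isolated. The pitfall to avoid is routing the proof through $T(\Phi)=d$ in \eref{eq:RoEbip1}, since that bound presupposes that $\{P,\id-P\}$ is a \emph{separable test} --- i.e. that $\id-P$ is separable too, which is not assumed here --- so one really does want the direct overlap estimate. It is also worth recording, for the later identification of optimal SP strategies with CB strategies, that equality $\rank(P)=d$ forces $|\<\Phi|\phi_k\>|^2=1/d$ for every $k$ with $c_k>0$, i.e.\ $|b_k\>\propto|a_k^*\>$ up to a phase; but that refinement feeds \lref{lem:Pbasis} rather than the present statement.
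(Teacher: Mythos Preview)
Your argument is correct and self-contained. It is essentially an explicit version of the paper's \emph{alternative} proof, which simply quotes $\tr(P)\geq E_\caR(\Phi)+1=d$; your overlap bound $|\<\Phi|\phi\>|^2\leq 1/d$ for product $|\phi\>$ is precisely the computation underlying $E_\caR(\Phi)=d-1$. Your caution about $T(\Phi)$ is well taken --- $T$ requires $\id-P$ separable --- but note that the robustness route avoids this, since $E_\caR(\rho)+1=\min\{\tr(W):W\text{ separable},\,W\geq\rho\}$ needs only $P$ separable.

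The paper's \emph{primary} proof is different and shorter: $\rank(P)\geq\rank(\tr_\rmB P)\geq\rank(\tr_\rmB|\Phi\>\<\Phi|)=d$. The second inequality is just positivity of the partial trace together with $\tr_\rmB|\Phi\>\<\Phi|=\id/d$. The first inequality is where separability enters (and it is genuinely needed: for $P=|\Phi\>\<\Phi|$ itself one has $\rank P=1<d=\rank\tr_\rmB P$). For separable $P=\sum_k c_k|a_k b_k\>\<a_k b_k|$ it holds because any linearly independent subfamily $\{|a_{k_i}\>\}$ lifts to a linearly independent family $\{|a_{k_i}b_{k_i}\>\}$ in the range of $P$. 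So the two approaches trade off: yours gives an explicit trace bound and, as you note, the equality analysis feeding \lref{lem:Pbasis}; the partial-trace line is terser but hides the separability hypothesis in a rank inequality that is easy to misread as generic.
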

\begin{proof}
	$\rank(P) \geq \rank (\tr_\rmB P)\geq \rank (\tr_\rmB |\Phi\>\<\Phi|)=d$. Alternatively, this conclusion follows from the observation that $\tr(P)\geq E_\caR(\Phi)+1=d$.
\end{proof}

\begin{lemma}\label{lem:OrthMUB}
	Suppose  $P_1, P_2\geq |\Phi\>\<\Phi|$ are two LP projectors. Then $\bar{P}_1$ and $\bar{P}_2$ are orthogonal iff $P_1=P(\caB_1)$ and $P_2=P(\caB_2)$, where $\caB_1$ and $\caB_2$ are two bases for $\caH$ that are mutually unbiased.
\end{lemma}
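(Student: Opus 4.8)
The plan is to reduce the statement about LP projectors $P_i\geq|\Phi\>\<\Phi|$ to a statement about a single pair of orthonormal bases, and then invoke \pref{pro:OrthMUB}. The ``if'' direction is immediate: if $P_i=P(\caB_i)$, then $\bar P_1\bar P_2=0$ exactly when $\caB_1,\caB_2$ are mutually unbiased, by \pref{pro:OrthMUB}. So the real content is the ``only if'' direction, where I must show that an LP projector $P$ satisfying $P\geq|\Phi\>\<\Phi|$, for which $\bar P=P-|\Phi\>\<\Phi|$ participates in an orthogonality relation, is forced to have the conjugate-basis form \eqref{eq:CBproj}.

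First I would exploit that $P$ is an LP projector: by definition $P$ is diagonal in some product basis $\{|a_j\>\otimes|b_k\>\}$, so $P=\sum_{(j,k)\in S}|a_j\>\<a_j|\otimes|b_k\>\<b_k|$ for some subset $S$ of index pairs. The constraint $P|\Phi\>=|\Phi\>$ then pins down the combinatorial structure of $S$: writing $|\Phi\>$ in the $\{|a_j\>\otimes|b_k\>\}$ basis and using that $|\Phi\>=\frac1{\sqrt d}\sum|jj\>$ is maximally entangled (hence has full-rank reduced states and all Schmidt coefficients equal), the condition that every component of $|\Phi\>$ lies in the range of $P$ forces, for each $j$, exactly one $k=\pi(j)$ with $(j,\pi(j))\in S$ and the matrix $\<a_j|b_{\pi(j)}\>$-type overlaps to be nonzero — in fact $\pi$ must be a bijection and the $|b_{\pi(j)}\>$ must be (up to phase) the complex conjugates $|a_j^*\>$. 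This is where the bulk of the work lies, and it is the step I expect to be the main obstacle: carefully showing that the only way an LP \emph{projector} can dominate a maximally entangled state is the rank-$d$ ``matched'' configuration $P=\sum_j|a_j\>\<a_j|\otimes|a_j^*\>\<a_j^*|=P(\caB)$ with $\caB=\{|a_j\>\}$. I would organize this by (i) using $\rank P\ge d$ from \lref{lem:SepProjRank} together with the fact that a strictly larger $S$ would make $P$ not dominate-only-minimally — more precisely, I would show any LP projector with $P\geq|\Phi\>\<\Phi|$ and $\tr P>d$ cannot have $\bar P$ orthogonal to a nonzero operator that also dominates (since then $\Omega=\frac12(P_1+P_2)$ would have $\tr\Omega>d=E_\caR(\Phi)+1$, violating optimality via \lref{lem:OptBound}); and (ii) concluding that in the orthogonality scenario both $P_1,P_2$ have rank exactly $d$, hence $\tr P_i=d$, and then reading off the matched form directly from the diagonal structure and $P_i|\Phi\>=|\Phi\>$.

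Once both $P_i$ are shown to be of the form $P(\caB_i)$ with $\rank\bar P_i=d-1$, the lemma follows: $\bar P_1\perp\bar P_2$ is, by \pref{pro:OrthMUB}, equivalent to $\caB_1$ and $\caB_2$ being mutually unbiased. One subtlety to be careful about is phases — the local bases $\{|a_j\>\}$ are only determined up to individual phases, but the conjugation relation $|b_j\>\propto|a_j^*\>$ must hold with \emph{matching} phases so that $\sum_j|a_j\>|b_j\>$ is proportional to $|\Phi\>$; I would absorb these phases into the definition of $\caB$ and note that the projector $P(\caB)$ is phase-independent anyway. A cleaner route to step (i), which I would try first, is purely dimensional: from $\bar P_1\bar P_2=0$ and both $\bar P_i\ge0$ with range orthogonal to $|\Phi\>$, we get $\rank\bar P_1+\rank\bar P_2\le D-1=d^2-1$; combined with $\rank\bar P_i\ge d-1$ (from \lref{lem:SepProjRank}) this is consistent for all $d$, so dimension counting alone is \emph{not} enough and the combinatorial analysis of $S$ genuinely seems unavoidable — that is the crux.
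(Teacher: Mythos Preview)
Your ``if'' direction is correct, and you correctly identify that the crux of the ``only if'' direction is forcing $\rank P_i=d$. However, your route (i) via \lref{lem:OptBound} does not work: that lemma (together with $T(\Phi)=d$) gives only a \emph{lower} bound $\tr\Omega\ge d$ for separable strategies, so $\tr\Omega>d$ violates nothing. Orthogonality of $\bar P_1,\bar P_2$ makes $\Omega=\tfrac12(P_1+P_2)$ parsimonious with $\beta(\Omega)=1/2$, but parsimony imposes no constraint on $\tr\Omega$ or on the ranks; only \emph{optimality} would, and $\beta=1/2$ is far from optimal for $d\ge2$. Since you yourself note that route (ii) fails, your proposal contains no working mechanism to pin down the ranks. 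There is also a misstep in your combinatorial sketch: the condition $P|\Phi\>=|\Phi\>$ alone does \emph{not} force ``for each $j$, exactly one $k=\pi(j)$''; it only forces that $|a_j^*\>$ lies in the span of $\{|b_k\>:(j,k)\in S\}$. The identity $P=\id$ is an LP projector dominating $|\Phi\>\<\Phi|$, showing that arbitrarily large $S$ is permitted in general.

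The paper supplies the missing idea by inverting your order of deductions. First, it extracts the correct consequence of $P_l\ge|\Phi\>\<\Phi|$ for an LP projector diagonal in $\caB_l\times\caB_l'$: not that $P_l=P(\caB_l)$, but merely that $P_l\ge P(\caB_l)$ (\lref{lem:LoProjOpt}, which is precisely the corrected version of your combinatorial claim). Orthogonality of $\bar P_1,\bar P_2$ then descends to orthogonality of $\bar P(\caB_1),\bar P(\caB_2)$, so by \pref{pro:OrthMUB} the bases $\caB_1,\caB_2$ are \emph{already} mutually unbiased. Second, \emph{using} unbiasedness, one computes
\[
1=\tr(P_1P_2)\ge \tr[P_1\,P(\caB_2)]=\frac{\rank P_1}{d},
\]
where the first equality follows from $\bar P_1\bar P_2=0$, the inequality from $P_2\ge P(\caB_2)$, and the last equality from $|\<\psi|\varphi\>|^2=1/d$ for $|\psi\>\in\caB_1$, $|\varphi\>\in\caB_2$. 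Combined with $\rank P_1\ge d$ from \lref{lem:SepProjRank}, this forces $\rank P_1=d$ and hence $P_1=P(\caB_1)$; symmetrically for $P_2$. The key point you are missing is to establish MUB \emph{before} the rank and then exploit it in the trace computation; without the sandwich $P_l\ge P(\caB_l)$, the argument does not get off the ground.
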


\begin{proof}
	The "if" part follows from \pref{pro:OrthMUB}.	Concerning the converse,	
	suppose $P_l$ is diagonal in the product basis $\caB_l\times \caB_l'$ for $l=1,2$. Then $P_l\geq P(\caB_l)$ by \lref{lem:LoProjOpt} below,  so  $\bar{P}_l\geq \bar{P}(\caB_l)$. If $\bar{P}_1$ and $\bar{P}_2$ are orthogonal, then $\bar{P}(\caB_1)$	and $\bar{P}(\caB_2)$ are orthogonal, so $\caB_1$ and $\caB_2$ are mutually unbiased by \pref{pro:OrthMUB}. In addition,
	\begin{equation}
	1=\tr(P_1 P_2)\geq \tr[P_1 P (\caB_2)]=\rank(P_1)/d,
	\end{equation}
	which implies that $P_1$ has rank $d$ in view of \lref{lem:SepProjRank}, so that $P_1=P(\caB_1)$. By the same token, $P_2=P(\caB_2)$.  
\end{proof}

\begin{lemma}\label{lem:LoProjOpt}
	Suppose  $P\geq |\Phi\>\<\Phi|$ is an LP  projector that is diagonal in the product basis $\caB\times \caB'$, then $P\geq P(\caB)$.
\end{lemma}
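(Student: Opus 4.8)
The plan is to read the block structure of $P$ off the hypotheses and then check that the entire pass eigenspace of $P(\caB)$ lies in the range of $P$; domination of one orthogonal projector by another then follows from the containment of ranges.

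First I would fix orthonormal bases $\caB=\{|\psi_1\>,\dots,|\psi_d\>\}$ and $\caB'=\{|\varphi_1\>,\dots,|\varphi_d\>\}$ of $\caH$ so that the product basis $\caB\times\caB'$ diagonalizes $P$. Since $P$ is a projector, it must be the orthogonal projector onto the span of a subset of these product vectors, i.e.
\be
P=\sum_{(i,j)\in S}\outer{\psi_i}{\psi_i}\otimes\outer{\varphi_j}{\varphi_j}
\ee
for some $S\subseteq\{1,\dots,d\}\times\{1,\dots,d\}$. The first key observation is that, $P$ being a projector and $|\Phi\>$ a unit vector, the operator inequality $P\ge\outer{\Phi}{\Phi}$ upgrades to the eigenvector equation $P|\Phi\>=|\Phi\>$; since $P$ is diagonal in $\caB\times\caB'$, this yields $\sum_{(i,j)\notin S}|\<\psi_i\varphi_j|\Phi\>|^2=0$, hence $\<\psi_i\varphi_j|\Phi\>=0$ for every $(i,j)\notin S$.

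Next I would invoke the elementary identity $\<\psi_i\varphi_j|\Phi\>=\frac{1}{\sqrt d}\<\varphi_j|\psi_i^*\>$, where $|\psi_i^*\>$ denotes the complex conjugate of $|\psi_i\>$ with respect to the computational basis defining $|\Phi\>$. Combined with the previous step, this says $\<\varphi_j|\psi_i^*\>=0$ whenever $(i,j)\notin S$. Expanding a pass vector of $P(\caB)$ in the product basis, $|\psi_i\>\otimes|\psi_i^*\>=\sum_{j}\<\varphi_j|\psi_i^*\>\,|\psi_i\>\otimes|\varphi_j\>$, every nonzero term therefore has $(i,j)\in S$, so $|\psi_i\>\otimes|\psi_i^*\>$ lies in the range of $P$ for each $i$.

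Finally, the $d$ vectors $\{|\psi_i\>\otimes|\psi_i^*\>\}_{i=1}^d$ are orthonormal, so $P(\caB)$ is precisely the orthogonal projector onto their linear span, which is contained in the range of $P$; hence $P\ge P(\caB)$, as claimed. I do not expect a genuine obstacle: the argument is mostly bookkeeping, and the only steps requiring a little care are the conjugation identity $\<\psi_i\varphi_j|\Phi\>=\frac{1}{\sqrt d}\<\varphi_j|\psi_i^*\>$ and the passage from $P\ge\outer{\Phi}{\Phi}$ to $P|\Phi\>=|\Phi\>$.
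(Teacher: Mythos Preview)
Your proposal is correct and follows essentially the same approach as the paper's own proof: both arguments write $P$ as a sum of product projectors indexed by a subset of $\{1,\dots,d\}^2$, use the identity $\<\psi_i\varphi_j|\Phi\>=\frac{1}{\sqrt d}\<\varphi_j|\psi_i^*\>$ together with $\<\Phi|P|\Phi\>=1$ to conclude that $|\psi_i^*\>$ lies in the span of those $|\varphi_j\>$ with $(i,j)$ in the index set, and then infer $P(\caB)\le P$. The only cosmetic differences are that the paper indexes by fibers $A_j$ rather than a single set $S$ and phrases the last step as an operator inequality on each $\caB$-block, while you phrase it as range containment of projectors; these are equivalent.
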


\begin{proof}
Suppose $\caB=\{|\psi_j\>\}_{j=1}^d$ and $\caB'=\{|\phi_j\>\}_{j=1}^d$; then $P$  has the form \begin{equation}\label{eq:LPprojector}
P=\sum_{j} \sum_{k\in A_j } |\psi_j\>\<\psi_j|\otimes|\phi_k\> \<\phi_k|,
\end{equation}
where $A_j$ are subsets of $\{1,2,\ldots, d\} $. In addition,
	\begin{align}
	&1=\<\Phi|P|\Phi\> =\sum_{j} \sum_{k\in A_j }  |\<\psi_j, \phi_k|\Phi\>|^2\nonumber\\
	& =\frac{1}{d}\sum_{j} \sum_{k\in A_j }  |\< \phi_k|\psi_j^*\>|^2\leq 1.
	\end{align}
	Here the upper bound is saturated iff each $|\psi_j^*\>$ is supported in  the span of $\{|\phi_k\>\}_{k\in A_j}$. It follows that  $|\psi_j ^*\>\<\psi_j^*|\leq  \sum_{k\in A_j } |\phi_k\> \<\phi_k|$, so that
	\begin{align}
	&P(\caB)=\sum_j |\psi_j\>\<\psi_j|\otimes |\psi_j ^*\>\<\psi_j^*|\nonumber\\
	&\leq \sum_{j} \sum_{k\in A_j } |\psi_j\>\<\psi_j|\otimes|\phi_k\> \<\phi_k|= P,
	\end{align}
which completes the proof.
\end{proof}

\begin{lemma}\label{lem:Pbasis}
	Any rank-$d$ separable projector $P$ on $\caH^{\otimes 2}$ that satisfies $P\geq|\Phi\>\<\Phi|$ has the form
	$P=P(\caB)$, where $\caB$ is an orthonormal basis for $\caH$.
\end{lemma}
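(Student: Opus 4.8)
The plan is to begin from an arbitrary separable decomposition of $P$, use $P\ge|\Phi\>\<\Phi|$ together with $\rank P=d$ to force the product vectors in it to have the shape $|a\>\otimes|a^*\>$, then use idempotency of $P$ to force the directions $|a\>$ to be pairwise parallel or orthogonal, and finish with a dimension count. Concretely, write $P=\sum_i c_i\,|a_i\>\<a_i|\otimes|b_i\>\<b_i|$ with $c_i>0$ and $|a_i\>,|b_i\>$ unit vectors. Since $\tr P=\rank P=d$ we get $\sum_i c_i=d$, and since $P\ge|\Phi\>\<\Phi|$ while $\|P\|=1$ we get $\<\Phi|P|\Phi\>=1$ (hence $P|\Phi\>=|\Phi\>$). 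Expanding $1=\<\Phi|P|\Phi\>=\tfrac1d\sum_i c_i|\<a_i|b_i^*\>|^2\le\tfrac1d\sum_i c_i=1$ via $\<\Phi|(A\otimes B)|\Phi\>=\tfrac1d\tr(AB^{\mathrm T})$ shows every inequality is tight, so $|\<a_i|b_i^*\>|=1$; hence $|b_i\>$ is a phase times $|a_i^*\>$, the phases cancel inside the projectors, and $P=\sum_i c_i\,|a_i\>\<a_i|\otimes|a_i^*\>\<a_i^*|$.

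Next I would use that $P$ is a projector. Every positive summand of a projector has range inside the range of that projector, so each $|a_i\>\otimes|a_i^*\>$ lies in the range of $P$ and is therefore fixed by $P$. Writing out $P\bigl(|a_j\>\otimes|a_j^*\>\bigr)=|a_j\>\otimes|a_j^*\>$ and passing through the linear isomorphism $|x\>\otimes|y^*\>\leftrightarrow|x\>\<y|$ turns this, for each $j$, into the operator identity $\sum_k c_k|\<a_k|a_j\>|^2\,|a_k\>\<a_k|=|a_j\>\<a_j|$. Taking the trace of this identity gives $\sum_k c_k|\<a_k|a_j\>|^2=1$, and sandwiching it between $\<a_j|$ and $|a_j\>$ gives $\sum_k c_k|\<a_k|a_j\>|^4=1$. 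Since $0\le|\<a_k|a_j\>|^2\le1$, comparing the last two equations forces, for every pair $(k,j)$, either $\<a_k|a_j\>=0$ or $|a_k\>\propto|a_j\>$.

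To finish, group the $|a_i\>$ into equivalence classes of mutually parallel vectors; distinct classes are orthogonal, yielding pairwise orthogonal unit vectors $|e_1\>,\dots,|e_r\>$, and after collecting coefficients $P=\sum_{m=1}^r d_m\,|e_m\>\<e_m|\otimes|e_m^*\>\<e_m^*|$ with $d_m>0$. Substituting $|a_j\>\propto|e_m\>$ into the operator identity of the previous paragraph gives $d_m|e_m\>\<e_m|=|e_m\>\<e_m|$, so $d_m=1$ for all $m$; then $\tr P=r=d$ forces $r=d$, so $\caB:=\{|e_1\>,\dots,|e_d\>\}$ is an orthonormal basis for $\caH$ and $P=P(\caB)$.

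I expect the main obstacle to be the second step: the given separable decomposition of $P$ need not be minimal nor mutually orthogonal, so one cannot argue directly about the number of terms; the right leverage is the elementary fact that every positive summand of a projector lies inside its range, which together with idempotency produces the identity $\sum_k c_k|\<a_k|a_j\>|^2|a_k\>\<a_k|=|a_j\>\<a_j|$ cleanly. Once that identity is available, the rest is a one-line equality analysis of a convex combination followed by a trace/dimension count.
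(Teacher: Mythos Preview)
Your proof is correct and takes a genuinely different route from the paper after the common first step. Both arguments begin identically: write $P=\sum_i c_i\,|a_i\>\<a_i|\otimes|b_i\>\<b_i|$, use $\tr P=d$ and $\<\Phi|P|\Phi\>=1$ to force $|b_i\>\propto|a_i^*\>$. From there the paper selects $d$ of the vectors $|a_j\>\otimes|a_j^*\>$ spanning the range of $P$, expands $|\Phi\>=\sum_{j=1}^d a_j\,|a_j\>\otimes|a_j^*\>$ in that (not yet orthogonal) set, and asserts that this identity forces $\{|a_j\>\}$ to be orthonormal with $a_j=1/\sqrt{d}$; the underlying linear-algebra fact (that $\sum_j a_j|a_j\>\<a_j|\propto I$ with $d$ linearly independent unit vectors $|a_j\>$ forces orthonormality, e.g.\ via the Gram matrix) is left implicit. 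You instead exploit idempotency directly: each summand has range inside $\operatorname{range}P$, so $P$ fixes every $|a_j\>\otimes|a_j^*\>$, yielding the operator identity $\sum_k c_k|\<a_k|a_j\>|^2|a_k\>\<a_k|=|a_j\>\<a_j|$; the trace/sandwich comparison then cleanly forces each pair of $|a_k\>$'s to be parallel or orthogonal, and a dimension count finishes. Your approach treats all summands uniformly without first selecting a spanning subset and makes the orthogonality step fully explicit, at the cost of a slightly longer computation; the paper's approach is terser but hides the crucial orthonormality deduction in a single sentence.
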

\begin{remark}
The support of $P(\caB)$ contains exactly $d$ product states, namely, $|\psi\> \< \psi|  \otimes |\psi^*\>  \<\psi^*|$ for $|\psi\>\in \caB$. So the basis $\caB$ in \lref{lem:Pbasis} is uniquely determined by $P$, assuming  that two bases $\caB, \caB'$ for $\caH$ are deemed identical if they differ only by the ordering or  overall phase factors of kets [note that such bases yield the same test projector; that is, $P(\caB')=P(\caB)$].
Then the mapping from bases for $\caH$ to projectors on $\caH^{\otimes2}$ as defined by $P(\caB)$ in \eref{eq:CBproj} is injective.
\end{remark}

\begin{proof}[Proof of \lref{lem:Pbasis}]
	By assumption $P$ is a linear combination of pure product states with positive coefficients,
	\begin{equation}
	P=\sum_j c_j |\varphi_j\>\<\varphi_j|\otimes |\phi_j\>\<\phi_j|, 
	\end{equation}
	where  $|\varphi_j\>$ and $|\phi_j\>$ are normalized kets, $c_j>0$, and  $\sum_j c_j=\tr(P)=d$. 
	We have
	\begin{equation}
	1=\<\Phi| P|\Phi\>=\frac{1}{d}\sum_j c_j |\<\phi_j^*|\varphi_j\>|^2\leq \frac{1}{d}\sum_j c_j =1,
	\end{equation}
	which implies that $|\phi_j\>=|\varphi_j^*\>$ for all $j$.  So there exist $d$ kets, say $|\varphi_1\>,\ldots,  |\varphi_d\> $,  such that $|\varphi_j\>\otimes |\varphi_j^*\>$ for $j=1,2,\ldots, d$ span the support of $P$. In addition, $|\Phi\>$ has the form $|\Phi\> =\sum_{j=1}^d a_j |\varphi_j\>\otimes |\varphi_j^*\>$.
	This equality can hold iff $\caB:=\{|\varphi_j\>\}_{j=1}^d$ is an orthonormal basis for $\caH$ and  $a_j =1/\sqrt{d}$ for $j=1,2,\ldots,d$. Now $\{|\varphi_j\>\otimes |\varphi_j^*\>\}_{j=1}^d$ is an orthonormal basis in the support of $P$, so  $P=P(\caB)$.
\end{proof}



\nocite{apsrev41Control}
\bibliographystyle{apsrev4-1}
\bibliography{all_references}

\end{document}